\newcommand{\found}{\emptyset}
\newcommand{\kad}[5]{
$\mathcal{K}(#1,#2,#3,#4,#5)$}
\newcommand{\calO}{\mathcal{O}}
\newcommand{\Hop}{H}
\newtheorem{theorem}{Theorem}[section]
\newtheorem{lemma}[theorem]{Lemma}
\newtheorem{definition}[theorem]{Definition}
\title{Comprehending Kademlia Routing - A Theoretical Framework for the Hop Count Distribution}
\author{\IEEEauthorblockN{Stefanie Roos and Hani Salah and Thorsten Strufe}
\IEEEauthorblockA{
Peer-to-Peer Networks Group\\
Technische Universit\"at Darmstadt \\
\{roos, hsalah, strufe\}@cs.tu-darmstadt.de}}
\newcommand{\schange}[2]{#2}
\begin{document}
\maketitle

\begin{abstract}

The family of Kademlia-type systems represents the most efficient and most widely deployed class of internet-scale distributed systems. Its success has caused plenty of large scale measurements and simulation studies, and several improvements have been introduced. Its character of parallel and non-deterministic lookups, however, so far has prevented any concise formal analysis.
This paper introduces the first comprehensive formal model of the routing of the entire family of systems that is validated against previous measurements. It sheds light on the overall hop distribution and lookup delays of the different variations of the original protocol. It additionally shows that several of the recent improvements to the protocol in fact have been counter-productive and identifies preferable designs with regard to routing overhead and resilience.
%
\end{abstract}

\section{Introduction}
\label{sec:intro}

Distributed Hash Tables (DHTs) received considerable attention during the last decade. 
On an abstract level, DHTs allow the mapping of objects to nodes in a completely
decentralized and highly dynamic network on the basis of IDs, such that
both the number of nodes contacted during object retrieval and the connections
maintained by each node increase logarithmically with the network size.
As a consequence, DHTs are candidate solutions for large-scale distributed
data storage as well as for decentralized resilient communication systems.

In practice, only variants of the Kademlia DHT \cite{Maymounkov02Kademlia}
have been deployed successfully, attracting millions of users in the file-
sharing applications BitTorrent and eMule \cite{junemann11towards, salah13capturing}. 
Even in networks of such an enormous size, the discovered routes are generally in the order of $3$
to $4$ hops \cite{Stutzbach06improving,Steiner10eval,Jimenez2011subsecond}.
Additionally, Kademlia's redundant routing tables combined with an iterative parallel lookup scheme 
make it particularly suitable for dynamic environments.

Despite the considerable attention Kademlia received from both research and industry,
the impact of the design parameters on the routing performance is poorly understood.
Measurements only offer insights on deployed systems, whereas simulations do not scale
beyond several ten thousands of nodes.

In order to assess different design choices, a concise model for the
complete hop count distribution 
is needed, which covers all the existing Kademlia implementations as well as 
allowing for a huge variety of modifications.
The model is required to give a close bound on the hop count distribution based on the routing table structure and the routing algorithm, and it has to consider the impact of churn and routing table incompleteness, while
still being computationally efficient.

We model routing in Kademlia as a Markov chain with a multi-dimensional
state space (Section \ref{sec:model}). Our derivation provides extremely tight upper and lower bounds
on the hop count distribution, and covers a wide range of overlay topologies.
Interestingly, analyzing the topologies of deployed systems, we find that they do not always outperform the original protocol. 
Furthermore, the analysis of the parameter space
enables us to derive guidelines for design decisions (Section \ref{sec:results}).

The computation of the hop count is efficient, requiring $\calO(n polylog(n))$ basic operations and $\calO(polylog(n))$ storage space.
Given such a moderate increase, networks of up to $1B$ nodes can easily be analyzed.

The model is verified in two ways: 
First, the initial model for a static environment is verified by simulations (Section \ref{sec:verification}). Secondly, the extended model, allowing for churn and routing table incompleteness, is compared
to  measurements made by Stutzbach and Rejaie for the KAD network \cite{Stutzbach06improving}, resulting in an error rate of $2.7 \%$ for the average hop count, 
in contrast to $5.5 \%$  provided by their analytic model (Section \ref{sec:churn}).

\section{Kademlia-type Systems}
\label{sec:related}


Kademlia is a structured peer-to-peer (P2P) system \cite{Maymounkov02Kademlia}. 
Nodes and objects are assigned IDs from the same $b$-bit identifier space and the distance between two identifiers is defined as the XOR of their values.
Kademlia implements key-based routing and storage of key-value (identifier-object) pairs. The nodes at the closest distance to an object's identifier are responsible for storing it.

Each node $v$ maintains a routing table to store the ID and address of other nodes, without keeping persistent network connections to them.
Known nodes are also called \emph{contacts}, and in case that contacts leave the system, the information stored about them may become outdated, which is commonly termed \emph{stale}.
The routing table in Kademlia is structured as a tree, which consists of a $k$-bucket storing up to $k$ contacts at each of its level $i$ (with $i$ $\in$ $[0, b)$ being the common prefix length of a contact and $v$).


Kademlia implements greedy routing: To route a message from node $v$ to a target identifier $t$ (for the storage or retrieval of objects), $v$ sends parallel lookup requests to the $\alpha$ known contacts that are closest to $t$. Every queried contact that is online replies with the set of $\beta$ nodes that are locally known as being closest to $t$, thus extending $v$'s set of candidate contacts. This process is iterated until the lookup does not produce any contacts closer to $t$ than previously have been discovered, or a timeout is caught.
The original Kademlia publication suggests to use $\alpha=3$ and $k=20$.

Kademlia proved highly efficient and reliable, and thus has frequently been modified, generating a broad family of Kademlia-type systems.
Each adaptation mainly adjusts the given parameters, or the routing table structure.

The current mainline implementation of BitTorrent (MDHT), for example, integrates a Kademlia-type DHT for node discovery.
uTorrent, the most popular client implementing MDHT, is implemented using $8$-buckets, $\alpha=4$, and $\beta=1$ \cite{Jimenez2011subsecond}.

To reflect the fractions of the namespace that is covered at different levels, and hence to increase the distance reduction at each hop, \cite{Jimenez2011subsecond} introduces variable bucket sizes $k_i$ (iMDHT).
They are chosen to be 128, 64, 32, and 16 for the buckets at levels $i \in (0..3)$ respectively, and 8 for all lower levels.

The Kademlia-type DHT used in the highly popular eMule file-sharing system, KAD, adds multiple buckets per level, grouping contacts according to the first $l$ bits after the first non-common bit. This way, the \emph{bit gain}, i.e. the difference between the common prefix length of the current hop and the next hop to $t$, is at least $l$. Choosing $k$ to be 10, the implementation contains buckets for all 4-bit prefixes at the lowest level (including contacts that share no common prefix with $v$), and one bucket for each of the sub-prefixes $111$, $110$, $101$, $1001$, and $1000$. Thus, the guaranteed distance reduction is $3$ bits for $75\%$ of the targets IDs, and $4$ bits for the remaining $25\%$. By default, KAD implements $\alpha=3$ and $\beta=2$.

The success of Kademlia-type systems has caused a large number of studies over the past few years \cite{Steiner07global, Wang08attacking, Steiner10eval, falkner07profiling, Jimenez2011subsecond, Crosby07ananalysis}, which are mainly based on large-scale measurements but do not yield insight into the impact of isolated design adaptations.
%
 
Analytic results of the routing are rare.
Existing studies are largely restricted to the asymptotic worst-case complexity of $\calO(\log n)$ routing steps for a network of order $n$.
A notable exception is \cite{Stutzbach06improving}, in which 
a formula for the average hop count is derived. 
This derivation, however, considers only the KAD implementation and fails to give further insight into the hop count distribution.
It hence does not allow for the choice of sensible timeout durations and termination criteria for more sophisticated, possibly time-critical applications.
Parallel lookups and a non-constant bucket size furthermore are disregarded.

In this paper, we show a derivation of the complete hop count distribution, which does not only
cover all the deployed Kademlia-type systems, but also allows a straightforward analysis of new designs.
\section{Model}
\label{sec:model}
%
The \emph{hop count} refers to the number of edges on the shortest path that has been traversed during the lookup process. Each routing step (i.e. \emph{hop}) is a \emph{transition} from a set of queried contacts to either another set of queried contacts or \schange{a successful}{routing} termination. 

In our model, a \emph{state} is defined by the common prefix lengths of the currently-known closest contacts with the target. That is, the state space of the Markov chain consists of $\alpha$-dimensional integer vectors. The \emph{initial distribution} $I$ corresponds to the common prefix lengths of the closest $\alpha$ contacts in the requester's routing table. 
\schange{A transition between a state $x=(a_1, \ldots , a_\alpha)$ and 
a state $y=(c_1, \ldots, c_\alpha)$
corresponds to the event that after querying $\alpha$ contacts
with common prefix lengths $a_1, \ldots , a_\alpha$,   
the new closest $\alpha$ contacts have common prefix lengths 
$c_1, \ldots, c_\alpha$.}
{A hop in the routing corresponds to a \emph{transition} from one $\alpha$-dimensional vector of common prefix lengths
to a either a second vector of common prefix lengths or routing termination.}

With the \emph{initial distribution} $I$ and the \emph{transition matrix}
$T$, the common prefix length distribution of the nodes queried in the $i$-th step is
$\Hop_i = T^{i-1}I$. 
As a consequence, the \emph{cumulative hop count distribution} can be obtained from $\Hop_i$
as the fraction of queries that have reached the terminal state. 
\schange{}{Due to the Markov property, our model fails to cover the improbable,
but technically possible, event that nodes other than those returned
from the most recent query are chosen to be contacted because 
the most recent query has not provided $\alpha$ distinct closer nodes.
We overcome this insufficiency by computing $T^{up}$ and $T^{low}$,
which provide an upper and lower bound, respectively, on the
fraction of terminated queries.
In the following, we first derive the distribution of closest entries in a
routing table in Section \ref{sec:closest}, which allows us to derive
$I$ in Section \ref{secI} and $T$ in Section \ref{sec:T}.

}

\schange{
The main difficulty in determining $I$ and $T$ is to derive the common prefix length 
distribution of the $\gamma \in \{\alpha, \beta \}$
closest contacts in a node $v$'s routing table.
Note that $v$ can be either the requester (initial distribution)
or a queried contact (transitions).
The \emph{closest contact distribution} is obtained in two steps. First, the probability
of contacting the target is computed as the ratio of the bucket size and
all nodes within the fraction of the identifier space corresponding to that bucket. Due to uniformly selected IDs, the latter
is modeled as a binomially distributed random variable. The common prefix lengths of all $\alpha$ new contacts are of interest, only if the target is not found.
Then the common prefix length distribution is obtained as the maximum $\gamma$ values
in a set of $k$ identically distributed random variables, which
represent the IDs in a bucket, chosen uniformly at random from all
IDs in the respective fraction of the ID space. Having derived the closest contact distribution, 
the initial distribution follows directly.
The \emph{transition probabilities} require an additional step of deriving
a distribution over the closest $\alpha$ distinct candidates after
obtaining distributions over up to $\alpha \beta$ returned
contacts.}
{}

\subsection{Assumptions}
\label{sec:assumptions}
We model a query for an ID of an existing node\schange{ rather than
a query for a file with potentially a high number of replicas.
But the latter case is very similar, and can be achieved
by slight changes.}{.}
Our basic model relies on the following assumptions, which allow 
a very general, application-independent view.
\schange{}{We first state the assumptions, before elaborating on their motivation
and impact on the validity of the model.}

\begin{enumerate}
\item There are no stale contacts in the routing tables.
\item Nodes do not fail nor do they drop messages.
\item Buckets are maximally full, i.e. if a bucket contains $k_1 < k$ values,
there are exactly those $k_1$ nodes in the region the bucket is responsible for.
\item Node IDs are uniformly and independently distributed over the whole identifier space.
\item Routing table entries are chosen independently. 
\item If the distance between a node and the target ID is $0$, the node's routing table contains the target.
\item The lookup uses strict parallelism, i.e. a node awaits all answers to its queries before 
sending additional ones.
\end{enumerate}

Assumptions 1, 2, and 3 can be summarized as the assumption of a steady-state system, without churn
or failures.
However, we extend the model in Section \ref{sec:churn} to allow
for churn and bucket incompleteness. 
Note that for applications such as critical infrastructures
and large data centers, churn is basically
non-existent and the failure rate can assumed to be
negligible. 
Assumption 4 is given in general, since the ID is usually chosen randomly
or as a hash of some identifying value, e.g. the IP address.
\schange{
Assumption 5 is slightly problematic, since nodes in a high number
of routing tables are more likely to be chosen by new contacts,
but our comparison with real-world measurements (see Section \ref{sec:churn}) indicates that this assumption does not dramatically
influence the quality of the model. 
}
{Assumption 5 holds in as far as that nodes discover contacts initially by
searching for their own ID which should result with contacts close to their
own ID independently of the starting point.
However, nodes encountered and potentially added during routing tend to have
a higher than average in-degree.
By this, the chance that a node in one routing table is present in another 
is slightly higher than the chance that a random node is contained in
a routing table.
Still, the probability should be negligible for large networks,
as indicated from the agreement of our model with real-world measurements.
}
Assumption 6 considers the case that multiple nodes share the same ID,
which can happen by Assumption 4 (independent choice), but is highly
unlikely in practice and hence only a theoretical construct to simplify
the derivations.
Assumption 7
is consistent with various implementations, whereas
others allow interleaving queries as well as more than $\alpha$
concurrently out-standing answers. Steiner et al. present an analysis of how the
latency can be enhanced by immediately reacting to a query \cite{Steiner08faster}.
However, it is not possible to always select the closest of all returned contacts, 
resulting in a higher hop count and number of contacted nodes.
Consequently, strict parallelism is optimal with regard to our metric of interest,
the hop count.

\subsection{Model Overview}
\label{sec:formal}
In the following, the idea is formalized, defining 
the parameters governing the routing and the states of the Markov chain.
   
The common prefix
lengths of the closest nodes to the target ID $t$ is used to characterize
the routing process.
Because routing is commonly modeled as a monotonously decreasing process
that converges to a distance of $0$, we define the distance of
two nodes $w$ and $v$ to be
\begin{align}
\label{eq:dist}
dist(w,v) &= b - commonprefixlength(id(w), id(v)) \nonumber \\
&= \lfloor \log_2 XOR(id(w) ,id(v)) \rfloor +1,
\end{align}  
where $b$ is the ID space size and $id(v)$ denotes the $b$-bit
ID of node $v$. 
\schange{In the following, distance refers to $dist$ as defined
above, unless stated otherwise.}
{We here use distance to refer to $dist$ rather than the XOR distance, unless stated otherwise.}

The state of a query is either $\found$, denoting a terminated query,
or the distance of the currently queried $\alpha$ nodes to the target $t$.
Formally, the state space is given by
\begin{align}
\label{eq:state}
&S_\alpha = \{\found\} \cup S'_\alpha \textnormal{ with } \\
S'_\alpha = \{ x=(x_1,\ldots, x_\alpha)& \in \mathbb{Z}^\alpha_{b+1}: x_j \leq x_{j+1}, j=1\ldots \alpha-1 \}. \nonumber
\end{align} 
Aiming to reduce the number of states and consequently the storage and computation cost, we assume the vector of distances to be sorted in ascending order.

It remains to define the parameters influencing the hop count.
We characterize a Kademlia-type system
by the ID space size $b$, the routing parameter $\alpha$ and $\beta$,
as well as the routing table parameters $k$ and routing table structure $L$
\schange{.}{, which determines the number of buckets per level as well
as how the ID space is split between those buckets.}
\begin{definition}
A  \kad{b}{\alpha}{\beta}{k}{L}-system is a Kademlia-type system
with the following properties:
\begin{itemize}
\item A $b$-bit identifier space is used for addressing.
\item $\alpha$ parallel iterative queries are send for each lookup.
\item Each queried node answers with at most $\beta$ contacts closer to the target than itself.
\item The $d$-entry $k_d$ of the vector $k \in \mathbb{N}_0^{b+1}$ gives the bucket size for nodes with distance $d$ 
to the routing table owner (i.e. the bucket size at level $b-d$).
\item \schange{The entry $L_{ij}$ of the matrix $L \in \mathbb{R}^{(b+1) \times (b+1)}$ gives the fraction of the ID space at
level $b-i$ for which $j$ more bits are resolved}
{ The $i$-th row of the matrix $L$ gives the distribution of the guaranteed bit gain at distance $i$ to the routing table owner, i.e. the entry $L_{ij}=\frac{x}{2^{i}}$ is defined by the number $x$
of IDs with distance $i$  that are sorted in buckets covering a region of $2^{i-j}$ IDs each.}
\end{itemize}
\end{definition}
Furthermore, the network order $n$ influences the hop count distribution. 
Note that in most Kademlia-type systems, such
as MDHT as well as KAD, $k$ is constant.
Similarly, the matrix $L$ is commonly sparse.
For instance, in MDHT only one bucket is used for each common prefix length,
so $L_{i1}=1$ for $i=0\ldots b$ and $L_{ij}=0$ in all other cases.
KAD is more complicated: 
$L^{KAD}_{b4}=1$, $L^{KAD}_{i3} = 0.75$, and $L^{KAD}_{i4} = 0.25$
for $i < b$ determine the routing table structure in the KAD system.
This is due to resolving at least $4$ more bits on the top level,
and splitting into buckets with prefixes $111$, $110$, $101$ ($75$\% of IDs), 
as well as $1001$ and $1000$ ($25$\% of IDs) for
all lower levels. 

This completes the introduction of the basic terminology.  
Next, the distribution of closest contacts in a node's routing
table is obtained, which is essential for computing both
the initial distribution as well as the transition matrix.
\subsection{Distribution of closest contacts}
\label{sec:closest}
We are interested in the distribution of the closest $\gamma \in \{ \alpha, \beta \}$ 
contacts to a target $t$ in a routing table of a node $v$.
Let the random variable $X_0$ with values in $\mathbb{Z}_{b+1}$ 
be the distance of $v$ to $t$. The random variable $X_1$ with values in $S_\gamma$ gives the state
characterizing the closest neighbors. 
In the following, we derive the probability $P(X_1 = s | X_0 = d)$.

By Assumption 6, the case $d=0$ is trivially given by
\begin{align}
\label{eq:d0}
&P(X_1 = s | X_0 = 0)
 = \begin{cases} 
 1, \quad s=\found \\
 0, \quad s\neq \found
 \end{cases}
 .
\end{align}
So, from now  we consider $d > 0$ 
\schange{.Additionally, we condition on the number of further
resolved bits $L_d=l$.
The general result is then obtained by computing the expectation over the $d$-th row
of the matrix $L$. }{. The success probability is determined by the 
distribution for the guaranteed bit gain $L_d$ defined by the $d$-th row of the matrix $L$
and the additional bit gain dependent on the bucket size $k_d$. 
\begin{align}
\label{eq:T1total}
\begin{split}
&P(X_1 = s | X_0 = d) \\
&= \sum_{l=0}^bP(X_1=s|X_0=d,L_d=l)P(L_d=l) \\
&= \sum_{l=0}^b P(X_1=s|X_0=d,L_d=l)L_{dl}
\end{split}
\end{align}
It remains to obtain $P(X_1=s|X_0=d,L_d=l)$.
}

We start by determining the probability to reach the state $\found$.
Recall that $r$'s routing table has $k_d$-buckets of nodes that differ in the $b$-$d$-th bit.
Let $x$ be the number of candidate nodes to be in the bucket, i.e. the
number of nodes in the respective part of the ID space.
If the bucket contains less than $k_d$ contacts, by \schange{our assumption of maximally full buckets (Assumption 3)}
{Assumption 3},
$t$ is one of them with probability $q_m=1$. 
Otherwise, with $m$ candidates, $t$ is contained in the
bucket with probability 
$q_m = \frac{k_d}{m+1}$\schange{}{, the likelihood that $t$ is one of the $k_d$ nodes selected 
among $m+1$ nodes}.
If a node has distance at most $d$-$l$ to $t$, there are $2^{d-l}$ IDs it
can potentially have,
making up a fraction $\frac{2^{d-l}}{2^b} = 2^{d-l-b}$ of all IDs.
The number of nodes $X$ within a fraction $2^{d-l-b}$ of the ID space is binomially distributed, 
$X \sim B(n-2,2^{d-l-b})$ ($n-2$ because $t$ and $v$ are excluded), by Assumption 4. 
So the probability that $t$ is contained in the routing table is computed as
\begin{align}
\label{eq:F1}
&P(X_1 = \found | X_0 = d,B_d=l) 
= \sum_{m=0}^{n-2} P(X=m)q_m \nonumber \\
&= \sum_{m=0}^{k_d} {n-2 \choose m} \left(2^{d-l-b}\right)^m \left(1-2^{d-l-b}\right)^{n-2-m} \\
&+ \sum_{m=k_d+1}^{n-2} {n-2 \choose m} \left(2^{d-l-b}\right)^m \left(1-2^{d-l-b}\right)^{n-2-m} \frac{k_d}{m+1}. \nonumber
\end{align}
If, on the other hand, $t$ is not contained in the routing table, we need to derive
$P(X_1 = (\delta_1,...,\delta_\gamma) | X_0= d,L_d=l)$
for all $(\delta_1,\ldots,\delta_\gamma) \in S_\gamma$.
The distribution of distances within one bucket is needed. 
The probability that a contact has a certain distance corresponds to the fraction of IDs with this distance.
Consequently, the cumulative distribution function $F_{d,l}$ of the distance of one randomly chosen
contact in a bucket of contacts with distance at most $d$-$l$ is given by
\begin{align} 
\label{eq:cdf}
F_{d,l}(x) = \min \{1, \frac{2^{\lfloor x \rfloor}}{2^{d-l}} \}.
\end{align}
for $x\geq 0$. 
Knowing the distance distribution of a random contact, one can derive the distance of
the $\gamma$ closest contacts.
First, we rewrite the vector $(\delta_1,...,\delta_\gamma)$, grouping identical values.
This later allows us to treat the number of nodes with the same distance as a binomially distributed random variable.
More specifically, the transformation $M$ is applied to $X_1$ in order to obtain tuples $M_1,\ldots , M_{\gamma '} \in \mathbb{Z}^2$,
so that the first component of $M_i = (y_i, c_i)$ is the $i$-th smallest value in $(\delta_1,...,\delta_\gamma)$
and $c_i$ is the number of occurrences of $y_i$ in $(\delta_1,...,\delta_\gamma)$. 
For reasons of presentation, we set $M_0 = (y_0,c_0)=(-1,0)$.
As a result, an equivalent expression
for the probability distribution of $X_1$ is the following:
\begin{align}
\label{eq:M}
&P(X_1 = (\delta_1,...,\delta_\gamma) | X_0= d,L_d=l) \nonumber \\
=&P(M(X_1) = ((y_1,c_1), \dots , (y_{\gamma'},c_{\gamma'})) | X_0= d,L_d=l) \nonumber \\
=&\big(1-P(X_1 = \found | X_0 = d,L_d=l)\big)\\
\cdot & \prod_{i=1}^{\gamma'} P\big(M_i = (y_i,c_i) | X_0= d,L_d=l, X_1 \neq \found,  \nonumber \\
& \quad \quad  \quad M_{0} = (y_0,c_0) , \ldots , M_{i-1} = (y_{i-1},c_{i-1}) \big) \nonumber
\end{align}
 

It remains to compute each factor in Eq. \ref{eq:M}.
We first treat the case $i < \gamma'$, for which 
we have to determine the probability that 1) all $C_{i-1}= k_d-\sum_{j=1}^{i-1} c_j$ bucket entries
\schange{}{with distance at least $y_{i-1}+1$} are
at distance at least $y_i$ to the target, and 2) there are \emph{exactly} $c_i$ such entries.
\schange{More precisely, the first event means that there are no contacts
at distance $y_{i-1}+1$ to $y_i-1$. }{}
More precisely, event 2) conditioned on event 1) corresponds to the event that a binomially distributed random variable with 
$C_{i-1}$ trials and success probability $p_i = \frac{F_{d,l}(y_i)-F_{d,l}(y_i-1)}{1- F_{d,l}(y_i-1)}$ has exactly $c_i$ successes. 
Note that the number of trials $C_i$ and the denominator $1- F_{d,l}(y_i-1)$ result from conditioning 
on $M_0, \ldots, M_{i-1}$ and event 1), respectively.
Using the above terminology, we get:   
\begin{align}
\label{eq:Mi}
&P\big(M_i = (y_i,c_i) | X_0= d,L_d=l, X_1 \neq \found, \nonumber \\
& \quad M_{0} = (y_0,c_0) , \ldots , M_{i-1} = (y_{i-1},c_{i-1})\big) \nonumber\\
=&P\big(M_i(1) \geq y_i | X_0= d,L_d=l, X_1 \neq \found, \nonumber \\
&\quad M_{0} = (y_0,c_0) , \ldots , M_{i-1} = (y_{i-1},c_{i-1})\big)\\
\cdot &P\big(M_i = (y_i,c_i) | X_0= d,L_d=l, X_1 \neq \found, \nonumber\\
& \quad M_{0} = (y_0,c_0) , \ldots , M_{i-1} = (y_{i-1},c_{i-1}),M_i(1) \geq y_i\big) \nonumber\\
=&\left(\frac{1 - F_{d,l}(y_i -1)}{1-F_{d,l}(y_{i-1})}\right)^{C_{i-1}}
{C_{i-1}\choose c_i} 
 p_i^{c_i} \left(1- p_i\right)^{C_i} \nonumber
\end{align}

For the $\gamma'$-th distinct value, the probability that there are \emph{at least} $c_{\gamma'}$  equal values rather than exactly $c_{\gamma'}$ values is derived. 
There might be other contacts with the same distance in the bucket, which are not part of 
\schange{the closest $\gamma$ contacts}{the chosen $\alpha$ contacts}. 
So, we have: 
\begin{align}
\label{eq:Mlast}
&P\big(M_{\gamma'} = (y_{\gamma'},c_{\gamma'}) | X_0= d,L_d=l, X_1 \neq \found, \nonumber \\
& \quad M_{1} = (y_1,c_1) , \ldots , M_{i-1} = (y_{\gamma'-1},c_{\gamma'-1})\big)\\
=&\left(\frac{1-F_{d,l}(y_{\gamma'})}{1-F_{d,l}(y_{\gamma'-1})}\right)^{C_{\gamma'-1}} \nonumber \\
&\left( 1 - \sum_{j=0}^{c_{\gamma'}-1} {C_{\gamma'-1}\choose j} p_{\gamma'}^{j}  \left(1- p_{\gamma'}\right)^{C_{\gamma'-1}-j}\right) \nonumber
\end{align}

\schange{Combining Eqs. \ref{eq:Mi}, and \ref{eq:Mlast}, the probability distribution for $X_1$ \schange{given fixed $d$ and $l$}{}
is defined by Eq. \ref{eq:M}. 
Note that
so that Eq. \ref{eq:M} can be simplified to}
{By Eqs. \ref{eq:Mi}, \ref{eq:Mlast} 
and the fact that
\begin{align*}
&1- \frac{F_{d,l}(y_{j})-F_{d,l}(y_{j}-1)}{1- F_{d,l}(y_{j}-1)}
=\frac{1-F_{d,l}(y_{j})}{1- F_{d,l}(y_{j}-1)},
\end{align*}
Eq. \ref{eq:M} can be simplified to}
\begin{align}
\label{eq:Mfinal}
&P\big(X_1 = (\delta_1,...,\delta_\gamma) | X_0= d,L_d=l\big) \nonumber \\
=& \left( \sum_{i=1}^{\gamma'-1}  {C_{i-1}\choose c_i} \left(F_{d,l}(y_i)-F_{d,l}(y_i-1)\right)^{c_i} \right) \\
&\cdot \left( \left(1- F_{d,l}(y_{\gamma'})\right)^{C_{\gamma'-1}} 
 - \sum_{j=0}^{c_{\gamma'}-1} {C_{\gamma'-1}\choose j} \right. \nonumber \\ 
&\quad \cdot  \left(F_{d,l}(y_{\gamma'})-F_{d,l}(y_{\gamma'}-1)\right)^{j}
\left(1- F_{d,l}(y_{\gamma'})\right)^{C_{\gamma'-1}-j}\Bigg) \nonumber
\end{align}

\schange{}{We can now determine the missing term $P(X_1=s|X_0=d,B=l)$ in Eq. \ref{eq:T1total},
thus completing the derivation of the closest contacts distribution.}

\subsection{Derivation of $I$}
\label{secI}
The derivation of the initial distribution $I$ 
requires the closest contact distribution as derived above and the distribution
of $X_0$.
For any state $s\in S$ with initial probability
$I(s)$, we have
\begin{align}
\label{eq:init}
I(s) = \sum_{d=0}^b P(X_1 = s | X_0=d) P(X_0=d).
\end{align}
The probability that a random requesting node $r$ has distance $d$
to $t$ corresponds to the fraction of IDs with this distance,
hence
\begin{align}
\label{eq:X0}
P(X_0=d) = \begin{cases}
\frac{1}{2^b}, \quad & d = 0 \\
\frac{2^{d-1}}{2^b}, \quad & d > 0 
\end{cases}.
\end{align}

\subsection{Derivation of $T$}
\label{sec:T}

The derivation of $T$ is more complex, but it is based on similar
concepts as earlier steps.
Let $A_0$ be the random variable for the current state, and $A_1$ the next state.
The transition probability
$P(A_1 = s | A_0=s_0)$
is derived for all $s_0, s \in S$.
The probability of the transition from $s_0$ to $s$ is given by 
first considering all possible sets of $\alpha \beta$ returned contacts
for state $s_0$.
For each set of returned contacts,  the probability distribution over
the set of distinct contacts needs to be derived.
\schange{Problems arise if less than $\alpha$ distinct new contacts are
returned. At this point the memoryless Markov chain model fails
to model the selection of a replacement contact, which has been returned in
earlier steps. 
Nevertheless, one can solve this rare case by using upper and lower
bounds on the distance of the critical contact to $t$.}{}

We start by considering case $A_0=\found$, for which 
\begin{align*}
P(A_1 = s | A_0=\found) = \begin{cases}
1, \quad s = \found \\
0, \quad s \neq \found
\end{cases}
\end{align*}
holds. The remaining entries of $T$ are of the form
$P\left(A_1 = s | 
A_0 = (d_1, \ldots ,d_\alpha)\right),$
where the next state $s$ is either $\found$ or a vector consisting of the distances of the $\alpha$ closest nodes queried in the next step.
The probability $P\left(Z^j=s^j | A_0(j)=d_j \right)$ for the state $Z^j=s_j=(s_j^1,\ldots , s_j^\beta)$ of the closest $\beta$ nodes in the routing table of the $j$-th queried node $v_j$
is given by Eq. \ref{eq:T1total}.
By Assumption $5$, routing tables are chosen independently, so that
\begin{align}  
\label{eq:independence}
\begin{split}
&P\left(Z^1 =  s^1,\ldots , Z^\alpha = s^\alpha | A_0  = (d_1, \ldots ,d_\alpha)\right) \\
&= \prod_{j=1}^\alpha P\left(Z^j=s^j | A_0(j)=d_j \right).
\end{split}
\end{align}
\schange{Again, we start by computing the probability of termination.}{} 
For each of the $\alpha$ considered contacts, the probability of termination is obtained from Eq. \ref{eq:F1},
using the bucket size $k_{d_j}$ and the $d_j$-th row of the matrix $L$.
The probability to terminate in the next step is then given as the complement of the event that
none of the parallel lookups terminates, i.e.
\begin{align}
\label{eq:F2}
\begin{split}
&P\left(A_1 = \found |  A_0 = (d_1, \ldots ,d_\alpha)\right) \\
&= 1 - \prod_{j=1}^\alpha \Big( 1 -P\left(Z_j = \found | A_0(j)=d_j \right) \Big).
\end{split}
\end{align}
If routing does not terminate,
it remains to obtain the closest $\alpha$ contacts from a set of $\alpha\beta$ returned
contacts. 
\schange{The derivation is complicated by the fact that a node can be returned multiple times, or be identical to an already
contacted node.}{}
\schange{The remaining part of these section mostly deals with the distribution of
distinct returned contacts.}{}
Let $\Gamma=(s^1_1,\ldots ,s^1_\beta, \ldots , s^\alpha_\beta)$ be the distances of the returned contacts.
\schange{In case a contact is returned multiple times, all but one of the $s^j_i$ associated with this contact are replaced by
a constant $K$.
If a contact is identical to one that has been considered, its distance is replaced by the distance
of an earlier returned candidate.
However, since the model does not provide information about such contacts, 
a distance $K^*$ is chosen, which provides an upper or lower bound. 
All known but not contacted nodes have distance at least $d_\alpha$,
so that for an upper bound on the success probability, we minimize the distance
of a replacement contact by $K^{up} = d_\alpha$.
In contrast, for a lower bound on the success probability, $K^{low}=b$ is chosen,
corresponding to a replacement node with maximal distance to $t$. }
{Due to the Markov property, we can only determine upper and lower bounds
on the distance of replacement contacts from earlier steps or the requester's routing
table. 
All known but not contacted nodes have distance at least $d_\alpha$,
so that for an upper bound on the success probability, we minimize the distance
of a replacement contact by $K^{up} = d_\alpha$.
In contrast, for a lower bound on the success probability, $K^{low}=b$ is chosen,
corresponding to a replacement node with maximal distance to $t$.
In the following, definitions and formulas specific to the upper bound are identified by the superscript
$up$, whereas the superscript $low$ characterizes the lower bound.
We use * to mean either $low$ or $up$.}

Denote by 
\begin{align*}
U^*(\Gamma) = \{ u = (u^1_1, \ldots , u^{\alpha}_\beta): u^j_i \in \{s^j_i,K^* \}\}    
\end{align*}
all possible
sets of distances of distinct contacts given the distances $(s^1_1,\ldots , s^\alpha_\beta)$.
So, in general, we obtain the transition probabilities as
\begin{align}
\label{eq:distinct}
&P\big(A_1  = (\delta_1,...,\delta_\alpha) | 
(A_0 = (d_1, ...,d_\alpha)\big) \nonumber \\
=& \sum_{\Gamma} \sum_{u \in U_\delta(\Gamma)} P^*(u | \Gamma) \\
& \prod_{j=1}^\alpha P\Big((Z^j(1),\ldots ,Z^j(\beta))=(s^j_1,...,s^j_\beta)| A_0(j)=d_j\Big) \nonumber
\end{align}
with 
$U_\delta(\Gamma) = \{u \in U^*(\Gamma): top_\alpha (u) = (\delta_1,...,\delta_\alpha) \}$.
In the following, we derive the probability $P^*(u | \Gamma)$
for each $u \in U^*(\Gamma)$. 
\schange{For a formal description, a considerable amount of notation needs 
to be introduced. }{}
The basis idea is to first find a maximal set $Y^*$
of definitive distinct contacts, and then iteratively determine for each
remaining  element the probability to be distinct from all elements in $Y^*$
as well as contacts queried earlier during routing.
The probability $P^*(u | \Gamma)$ for $u \in U(\Gamma)$ in Eq.  \ref{eq:distinct} is inductively computed, \schange{iterating over all $s^j_i \in \Gamma\setminus Y^*$}{conditioning
on $Y^*$}. 
More precisely, we transform
\begin{align}
\label{eq:iteration}
&P^*(u=(u^1_1,\ldots , u^\alpha_\beta) | \Gamma) \nonumber \\
&= P^{*}(u^1_1 | Y^{*},Z)P^{*}(u^1_2 | Y^{*},\Gamma,u^1_1)\\
&\cdots P^{*}(u^1_\beta | Y^{*},\Gamma,u^1_1,\ldots , u^1_{\beta-1}) P^{*}(u^2_1 | Y^{*},\Gamma,u^1_1, \ldots , u^1_\beta) \nonumber \\
&\cdots P^{*}(u^\alpha_\beta | Y^{*},\Gamma,u^1_1, \ldots , u^\alpha_{\beta-1}) \nonumber.
\end{align}
and determine each factor.
For a distance $a$ and a queried contact $v_j$, let $y^j_a$ be the set of indexes $(j,i)$, so that $s_j^i=a$.
\schange{Furthermore, $y^{max}_a = argmax \{ |y^j_a|: j=1\ldots \alpha\}$ contains a maximum number of distinct returned contacts with distance $a$.}
{Since each set $y^j_a$ contains from the routing table of one node, these are distinct, and
$y^{max}_a = argmax \{ |y^j_a|: j=1\ldots \alpha\}$ contains the maximal number of contacts with
distance $a$ that are guaranteed to be unique.}
So all contacts in
$Y^{low}= \bigcup_{a=0}^{d_\alpha-1} y^{max}_a$
\schange{are necessarily distinct.}{are unique and have not been contacted before because $d_1$ is
the minimal distance of all nodes contacted up to this point.}
In contrast, for the upper bound, earlier steps are not considered for computing the probability of a contact to be distinct,
i.e.
$Y^{up}= \bigcup_{a=0}^{b} y^{max}_a$.

The probability that \schange{a node at distance $s^j_i$}
{the $i$-th node returned by $v_j$ and having distance $s^j_i$ to $t$} is identical to an earlier considered one
is given as the ratio of contacted nodes at distance $s^j_i$ and all nodes at distance
$s^j_i$.
Consequently, we first compute the number of nodes $count(s^j_i, Y^{*},\Gamma,u^1_1, \ldots , u^j_{i-1})$ at distance $s^j_i$
that have been contacted and may be identical.
\schange{ The remaining number
of nodes with distance $s^j_i$ is then again a binomially distributed random variable.}{}
\schange{If $s^j_i < K^*$, we only }{If only nodes from the current set of returned
contacts are considered, i.e. for the upper bound or if $s^j_i < d_1$, let} 
\begin{align}
\label{eq:count1}
\begin{split}
&count\left(s^j_i, Y^{*},\Gamma,u^1_1, \ldots , u^j_{i-1}\right) 
= |y^{max}_{s^j_i}| 
+ |\{ (\gamma,\mu):  \\
& \quad s^j_i = s^\gamma_\mu = u^\gamma_\mu,\gamma < j \}| 
- |\{ \mu: s^j_i = s^j_\mu, u^j_\mu = r, \mu < i \}|
\end{split}
\end{align}
be the number of returned contacts that are potentially identical to the $i$-th
returned contact of $j$-th queried node $v_j$.
\schange{These consist of the maximum definitively distinct contacts and further distinct 
contacts in $u$. }{These consists of all returned contacts that have been decided to be unique
up to this point.}
The subtraction follows from the fact that $v_j$'s returned contacts are distinct.
So if a different contact returned by $v_j$ is identical to some contact $w$, we know that the $i$-th
contact is not identical to $w$.
On the other hand, \schange{if $s^j_{i} \geq K$}{if we are considering the lower bound and $s^j_i \geq d_1$}{} , we set 
\begin{align}
\label{eq:count2}
\begin{split}
&count(s^j_i, Y^{*},\Gamma,u^1_1, \ldots , u^j_{i-1}) = \alpha b,
\end{split}
\end{align} 
since each parallel lookup is guaranteed to terminate after maximally $b$ steps.
The non-contacted number of nodes $X^j_i$ at distance $s^i_j$ is
$B(n-\alpha\beta, 2^{s^j_i-1-b})$ distributed.
Using the above terminology,
\begin{align}
\label{eq:pu1}
&P^{*}(u^j_i=s^j_i | Y^{*},\Gamma,u^1_1, \ldots , u^{j-1}_{\beta-1},u^j_1,\ldots u^j_{i-1}) \nonumber \\
&= \sum_{m=0}^{n-\alpha\beta} P(X^j_i=m)\frac{m}{m+count(s^j_i)} \\
&=  \sum_{m=0}^{n-\alpha\beta} {n-\alpha\beta \choose m}
\left(2^{s^j_i-1-b}\right)^m \left(1-2^{s^j_i-1-b}\right)^{n-\alpha\beta-m} \nonumber \\
& \quad \frac{m}{m+count(s^j_i, Y^{*},\Gamma,u^1_1, \ldots , u^j_{i-1})} \nonumber
\end{align}
for $(j,i) \notin Y^{*}$ and by construction
\begin{align}
\label{eq:pu2}
P^{*}(u^j_i=s^j_i | Y^{*},\Gamma,u^1_1, \ldots , u^{j-1}_{\beta-1},u^j_1,\ldots u^j_{i-1}) =1
\end{align}
if $(j,i) \in Y^{*}$.
Inserting Eq. \ref{eq:pu1} and Eq. \ref{eq:pu2} in Eq. \ref{eq:iteration}, the remaining
term $P(u|\Gamma)$ in Eq. \ref{eq:distinct} is determined. 
Eq. \ref{eq:distinct} now gives the transition probabilities for general queries with the
goal of finding a lower or upper bound on the hop count distribution.
This completes our derivation of $T^{low}$ and $T^{up}$. 

\subsection{Summary}
We have modeled the hop count distribution in a Kademlia-type system as a Markov
chain with an $\alpha$-dimensional state space corresponding to the
$\alpha$ contacted nodes in each step.
We derived an initial distribution $I$ on the closest contacts in the requester's
routing table and transition matrices $T^{low}$ and $T^{up}$
for upper and lower bounds on the hop count distribution.
The fraction of queries that need at most $i$ steps is consequently bounded
by computing the distributions $\Hop^{up}_i$ and $\Hop^{low}_i$
and choosing the entry corresponding to $\found$.

Note that there are various possibilities to map the transition probabilities in 
Eq. \ref{eq:distinct} to entries in the matrices.
Any  bijective mapping from $S$ to $\mathbb{Z}_{|S|}$ (i.e. the row/column index)
is suitable.
On the basis of such mapping, we analyze the storage and computation complexity in the next section. 

\section{Model Complexity}
\label{sec:analysis}

In the first part of this section, we determine the space  
and computation complexity of deriving the hop count distribution.
Finding that the complexity is at least $\calO(b^\alpha)$, an evaluation of
the accuracy of smaller ID spaces than the common 128 or 160 bits is considered.

\subsection{Space complexity}
We assume that the whole matrix $T$ needs to be stored, without any memory
enhancements.

\begin{lemma}
\label{thm:storage}
The storage complexity for computing the hop count distribution of a 
\kad{b}{\alpha}{\beta}{k}{L}-system is $\mathcal{O}\left(\frac{1}{(\alpha!)^2}b^{2\alpha}\right)$.
\end{lemma}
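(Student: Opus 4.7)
The plan is to bound the storage cost by counting the number of states in $S_\alpha$ and then squaring, since the whole transition matrix $T$ is stored.

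First I would count $|S_\alpha|$. By definition, $S_\alpha = \{\found\} \cup S'_\alpha$, where $S'_\alpha$ consists of all weakly increasing $\alpha$-tuples with entries in $\mathbb{Z}_{b+1} = \{0,1,\ldots,b\}$. Such tuples are in bijection with multisets of size $\alpha$ over a ground set of $b+1$ elements, so
\begin{equation*}
|S'_\alpha| = \binom{b+\alpha}{\alpha}, \qquad |S_\alpha| = \binom{b+\alpha}{\alpha} + 1.
\end{equation*}
Here the fact that we assumed the distance vectors to be sorted in ascending order (Section \ref{sec:formal}) is essential; without this the state space would be of size $(b+1)^\alpha$ and would yield a worse, non-matching bound.

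Next I would extract the asymptotic. Treating $\alpha$ as a constant (as is standard in practice where $\alpha \in \{3,4\}$), a direct expansion gives
\begin{equation*}
\binom{b+\alpha}{\alpha} = \frac{(b+\alpha)(b+\alpha-1)\cdots(b+1)}{\alpha!} = \frac{b^\alpha}{\alpha!} + O(b^{\alpha-1}),
\end{equation*}
so $|S_\alpha| = \frac{1}{\alpha!}b^\alpha + O(b^{\alpha-1})$.

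Finally, I would observe that the transition matrix $T$ (whether $T^{up}$ or $T^{low}$) is indexed by pairs of states and, under the stated assumption that the whole matrix is stored without memory enhancements, requires $|S_\alpha|^2$ scalar entries. Each entry is a single probability, hence of constant size. Squaring the above asymptotic yields
\begin{equation*}
|S_\alpha|^2 = \frac{1}{(\alpha!)^2}b^{2\alpha} + O(b^{2\alpha-1}) = \mathcal{O}\!\left(\frac{1}{(\alpha!)^2}b^{2\alpha}\right),
\end{equation*}
which is the claimed bound. The only mild subtlety is justifying that the other quantities retained during the computation (the initial distribution $I$, the row $L_d$ used in Eq.~\ref{eq:T1total}, and the working vectors $\Hop_i$) each occupy at most $|S_\alpha|$ cells and are therefore absorbed into the matrix cost; this is immediate from their definitions in Sections \ref{secI} and \ref{sec:T}.
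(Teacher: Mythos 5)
Your proof is correct and follows essentially the same route as the paper: the storage cost is dominated by the $|S_\alpha|^2$ entries of $T$, and the count of weakly increasing $\alpha$-tuples gives $|S_\alpha| = \Theta\bigl(\tfrac{1}{\alpha!}b^\alpha\bigr)$. The only difference is that you evaluate the count exactly as $\binom{b+\alpha}{\alpha}$ via the multiset bijection, whereas the paper bounds the equivalent nested sum by an iterated integral; both yield the same asymptotics.
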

\begin{proof}
The storage complexity is dominated by the matrix $T \in \mathbb{R}^{|S|^2}$.
Consequently, $|S|$ needs to be determined.
\begin{align*}
\begin{split}
 |S|  
&=|\{\found\} \cup \{ s \in \mathbb{Z}^\alpha_{b+1}: s_j \leq s_{j+1}, j=1\ldots \alpha-1 \}|\\
&=1 + \sum_{i_\alpha=0}^{b} \sum_{i_{\alpha-1}=0}^{i_\alpha} \cdots \sum_{i_1=0}^{i_2} 1 \\
&= \mathcal{O}\left( \int_0^b \int_{0}^{x_\alpha} \cdots \int_{0}^{x_2} 1 dx_1dx_2\ldots dx_\alpha\right)\\
&= \mathcal{O}\left( \frac{1}{\alpha!} b^\alpha \right)
\end{split}
\end{align*}
The size of the matrix $T$ is $S^2$ and by this
the space complexity is $\mathcal{O}\left(\frac{1}{(\alpha!)^2}b^{2\alpha}\right)$ as claimed.
\end{proof}

\subsection{Computation complexity}
We bound both the computation complexity in terms of necessary basic operations of the lower as well as the upper bound.

\begin{lemma}
\label{thm:complex}
The computation complexity is linear with regard to the network order $n$, and
polynomial with regard to the bit number $b$.
 More precisely the number of basic operations is 
of order \mbox{$\calO\left(n b^{\alpha (\beta + 2)} \right)$}.
\end{lemma}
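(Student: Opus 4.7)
The plan is to isolate three sources of work in the overall derivation---(i) precomputing the closest-contact distribution from Section \ref{sec:closest}, (ii) assembling the transition matrix $T$ from Eq.~\ref{eq:distinct}, and (iii) propagating $I$ through $T$ to extract $\Hop_i$---and to show that step (ii) is asymptotically dominant and already matches the claimed bound $\calO(n b^{\alpha(\beta+2)})$.

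For step (i), I would precompute and tabulate $P(X_1 = s \mid X_0 = d)$ for every pair with $s \in S_\beta \cup \{\found\}$ and $d \in \{0,\ldots,b\}$. By the same counting used in Lemma \ref{thm:storage}, $|S_\beta| = \calO(b^\beta/\beta!)$, giving $\calO(b^{\beta+1})$ table entries. The only nontrivial cost per entry is the binomial sum in Eq.~\ref{eq:F1} with $n{-}2$ terms; Eq.~\ref{eq:Mfinal} contributes only constant further work once $F_{d,l}$ is evaluated. Precomputation thus takes $\calO(n b^{\beta+1})$, well below the target.

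For step (ii), I would bound each entry of $T$ using the structure of Eq.~\ref{eq:distinct}. There are $|S|^2 = \calO(b^{2\alpha}/(\alpha!)^2)$ entries. For a fixed $(s_0,s) = ((d_1,\ldots,d_\alpha),(\delta_1,\ldots,\delta_\alpha))$, the outer sum ranges over vectors $\Gamma = (s^1,\ldots,s^\alpha)$ with each $s^j \in S_\beta$, contributing $\calO(b^{\alpha\beta}/(\beta!)^\alpha)$ summands. For each $\Gamma$, the inner sum over $u \in U_\delta(\Gamma)$ has at most $2^{\alpha\beta} = \calO(1)$ terms because each coordinate $u^j_i$ is restricted to $\{s^j_i, K^*\}$. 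Evaluating $P^*(u\mid\Gamma)$ via Eq.~\ref{eq:iteration} requires $\alpha\beta$ factor evaluations, each an $\calO(n)$ binomial sum from Eq.~\ref{eq:pu1}, while the product over $j$ of $P(Z^j = s^j \mid A_0(j) = d_j)$ reduces to $\calO(\alpha)$ table lookups into the precomputation above. Multiplying gives
\begin{align*}
|S|^2 \cdot \calO(b^{\alpha\beta}) \cdot \calO(\alpha\beta\, n) = \calO\!\left(n\, b^{2\alpha + \alpha\beta}\right) = \calO\!\left(n\, b^{\alpha(\beta+2)}\right),
\end{align*}
as claimed.

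For step (iii), each matrix-vector product $T\Hop_i$ costs $\calO(|S|^2) = \calO(b^{2\alpha})$, and the routing terminates within $\calO(b)$ hops with overwhelming probability, so the full iteration contributes $\calO(b^{2\alpha+1})$, subsumed by the assembly cost. The main obstacle is the careful accounting in step (ii): one must argue that $|U_\delta(\Gamma)|$ remains constant-size despite the combinatorial flexibility in resolving duplicates, and that the $\alpha\beta$ factors in Eq.~\ref{eq:iteration} need not be re-expanded into products of $\calO(n)$ binomial sums separately for every $(s_0,s)$---this is handled by treating the $P(Z^j = s^j \mid A_0(j)=d_j)$ terms as cached and only charging $\calO(n)$ once per $(j,i)$ factor in the $u$-evaluation. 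Combining (i)--(iii) yields the stated complexity.
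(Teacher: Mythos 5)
Your proposal is correct and follows essentially the same decomposition as the paper's proof: precompute the closest-contact distributions, charge the dominant $\calO\left(n b^{\alpha(\beta+2)}\right)$ cost to assembling $T$, and subsume the $\calO\left(b^{2\alpha+1}\right)$ matrix iteration; the paper merely organizes the dominant count as (source state, returned-set) pairs via nested sums and integrals rather than your coarser $|S|^2\cdot|S_\beta|^\alpha$ enumeration, which yields the same exponent $2\alpha+\alpha\beta=\alpha(\beta+2)$. The only thing you omit is the cost of the initial distribution $I$, which requires the $\gamma=\alpha$ closest-contact table in addition to the $\gamma=\beta$ one, but that contributes at most $\calO\left(n b^{\alpha+1}\right)$ and is dominated.
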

\begin{proof}
We need to analyze the computation costs for the initial distribution $I$, the transition matrix
$T$, and the matrix multiplication for obtaining $P_i$.

Note that in case of both $I$ and $T$, the computation of $\gamma \in \{\alpha, \beta \}$ closest neighbor distribution
is essential.
The success probability given in Eq. \ref{eq:F1} can be determined in $\calO\left( n \right)$
if binomial coefficients are computed iteratively. Note that this has to be done for $d=1,\ldots , b$,
resulting in a cost of 
\begin{align}
\label{eq:foundcompute}
H_{\found} = \calO(n \cdot b).
\end{align}
These probabilities can be precomputed and stored, as can the values of
the cumulative distributions $F_{d,l}$ and the binomial coefficients used in Eq. \ref{eq:Mfinal}.
Using iterative computations of powers and binomial coefficients, the cost for these computations is
\begin{align}
\label{eq:cdfcompute}
H_{P} = \calO(b^3 + \max\{\alpha, \beta \}^2) 
\end{align}
The term $b^3$ for the CDF computations follows because there are $\calO(b^2)$ functions (one for
each $d,l \in \mathbb{Z}_{b+1}$), each taking up to $b$ distinct values.
Assuming precomputation, one evaluation of Eq. \ref{eq:Mfinal} has computation cost
$\calO(\gamma \kappa)$ for  $\kappa = \max\{ k_d: d=0 \ldots b \}$.
To see this, consider that the sum from $1$ to $\gamma'-1$ has at most $\gamma-1$ summands
that are products of terms $F_{d,l}(y_i)-F_{d,l}(y_i-1)$ and binomial coefficients.
The remaining factor for the last term has at most $\gamma$ summands, each consisting of at
most $k_d+1$ factors. 
Note that for each pair $d$, $l$, there are $d-l$ distances
a node
in the respective bucket can have.
Therefore, the number of evaluations for a given distance $d$
 is, similarly to Lemma \ref{thm:storage}, bounded by
\begin{align}
\label{eq:nrofc}
\begin{split}
& \sum_{l=1}^d \sum_{\delta_\alpha=0}^{d-l} \sum_{\delta_{\alpha-1}=0}^{\delta_\alpha} \cdots \sum_{\delta_1=0}^{i_2} 1 \\
=& \calO\left( \int_0^d \int_0^{y-z} \int_{0}^{x_\alpha} \cdots \int_{0}^{x_2} 1 dx_1dx_2\ldots dx_\alpha dz \right) \\
=& \calO\left( \int_0^d\frac{1}{\gamma !} (d-z)^\gamma dz \right) 
= \calO\left(  \frac{1}{(\gamma+1) !} d^{\gamma+1} \right)
\end{split}
\end{align} 

For the initial distribution, the computation cost is hence given by
\begin{align}
\label{eq:initcompute}
H_I = \calO\left(  \frac{1}{(\alpha+2) !} b^{\alpha+2} \alpha \kappa\right),
\end{align}
summarizing over $d=1\ldots b$.
The additional computation cost of the requester's distance distribution $X_0$ in Eq. \ref{eq:X0}
is $\calO(b)$ (as usual, we assume that powers of two are calculated iterative), which is clearly dominated by the
computation cost of closest neighbor distribution. 

In contrast, for computing the transition matrix, one first has to consider all possible $\alpha\beta$ returned values for each state $A_0=(d_1,\ldots , d_\alpha)$. 
The number of returned sets is determined based on  Eq. \ref{eq:nrofc} with $\gamma=\beta$.
\begin{align*}
\begin{split}
&\calO\left( \sum_{d_\alpha=0}^{b} \frac{1}{(\beta+1)!}d_\alpha^{\beta+1}  \cdots \sum_{d_1=0}^{d_2} \frac{1}{(\beta+1)!}i_1^{\beta+1}\right) \\
=&\calO\left(\frac{1}{((\beta+1)!)^\alpha} \int_{0}^{b} x_\alpha^{\beta+1}  \cdots \int_{0}^{x_2} x_1^{\beta+1} dx_1 \ldots dx_{\alpha}\right) \\
=& \calO\left(\frac{1}{((\beta+1)!)^\alpha} \int_{0}^{b} x_\alpha^{\beta+1}  \cdots \int_{0}^{x_3} \frac{1}{\beta+2}x_2^{2\beta+3} dx_2 .. dx_{\alpha}\right) \\
=& \calO\left(\frac{1}{((\beta+1)!)^\alpha} b^{\alpha \beta + 2\cdot \alpha} \prod_{j=1}^\alpha \frac{1}{j\beta + 2\cdot j}\right) \\
=& \calO\left(\frac{1}{((\beta+1)!)^\alpha} b^{\alpha (\beta + 2)} \prod_{j=1}^\alpha \frac{1}{(\beta + 2)j}\right)
\end{split}
\end{align*} 
For each set of returned contact all possible distinct sets have to be evaluated.
This results in a factor $\calO\left( 2^{\alpha\beta} \right)$. 
The probability of each contact being distinct is determined in $\calO(n)$ operations by Eq. \ref{eq:pu1},
however, it is possible to precompute the probabilities for all distances $d$ and $(\alpha-1)\beta+1$ values of 
count (Eq. \ref{eq:count1} and Eq. \ref{eq:count2}).
So an additional cost of $\calO\left(n\alpha\beta 2^{\alpha\beta} \right)$
per set of returned contacts is needed.
Furthermore, for each of these combinations the function $\top_\alpha$ is applied, which is a factor of $\calO \left(\alpha^2 \beta\right)$. The total complexity of transition matrix computation is hence
\begin{align}
\label{eq:tcompute}
\begin{split}
&H_{T} = \calO\left( \frac{1}{((\beta+1)!)^\alpha} b^{\alpha (\beta + 2)} \prod_{j=1}^\alpha \frac{1}{(\beta + 2)j} 
\beta \kappa n\alpha\beta 2^{\alpha\beta}  \right) . 
\end{split}
\end{align}
It remains to determine the overhead of matrix multiplication.  
The target is definitively reached after at most $b+1$ steps, since the distance of the first lookup decreases
by at least one in each step. 
Each matrix multiplication takes $|S|^2$ operations.
By the proof of Lemma \ref{thm:storage}, the complexity of the matrix operations is hence
\begin{align}
\label{eq:multcompute}
H_{M} = \calO\left( |S|^{2}b\right) = \calO\left(\frac{1}{(\alpha!)^2} b^{2\alpha+1}\right)
\end{align}

Summarizing over all computation costs, the total complexity is
\begin{align*}
\begin{split}
& H_{\found} + H_{P} +H_I + H_T + H_M \\
&=\calO\left( nb + b^3 + \max\{\alpha, \beta \}^2  + \alpha \kappa \frac{1}{(\alpha+2) !} b^{\alpha+2}  \right. \\
& \left. \quad + \frac{1}{((\beta+1)!)^\alpha} \prod_{j=1}^\alpha \frac{1}{(\beta + 2)j} 
 \kappa \alpha\beta^2 2^{\alpha\beta}  n b^{\alpha (\beta + 2)} \right. \\
 & \left. \quad + \frac{1}{(\alpha!)^2} b^{2\alpha+1} \right)
\end{split}
\end{align*}
basic operations by Eqs. \ref{eq:foundcompute}, \ref{eq:cdfcompute}, \ref{eq:initcompute}, \ref{eq:tcompute}, and
\ref{eq:multcompute}. 
Treating $\kappa$, $\alpha$, and $\beta$ as constant factors gives the claimed complexity.

\end{proof}

\subsection{Reducing the ID space size}
\label{sec:accuracy}

From Lemma \ref{thm:storage} and \ref{thm:complex}, we can see that both the storage and the computation complexity is polynomial in the bit-size $b$ for a relative high degree polynomial. 
In contrast, the dependence on the network order $n$ is only linear for the computation complexity, whereas the storage complexity
is independent of $n$. 
Though the dependence on $\alpha$ and $\beta$ is exponential, both can be assumed to be small.
For instance, if $\alpha=3$, the number of entries in the matrix $T$ can be precisely computed as
\begin{align*}
\begin{split}
&|\{F\} \cup \{(s_1,s_2,s_3) \in \mathbb{Z}_{b+1}^3: s_1 \leq s_2 \leq s_3  \}|^2 \\
&=\left(1 + \sum_{i_3=0}^b \sum_{i_2=0}^{i_3} \sum_{i_1=0}^{i_2} 1\right)^2 \\
&=\left( 1 + \sum_{i_3=1}^{b+1} \frac{i_3(i_3+1)}{2}\right)^2\\
&=\left( 1 + 0.5 \cdot \sum_{i_3=1}^{b+1} (i_3^2 + i_3) \right)^2\\
&=\left( 1 + 0.5 \cdot \left(\frac{(b+1)(b+2)(2b+3)}{6} + \frac{(b+1)(b+2)}{2}\right) \right)^2\\
&=\left( 1 + \frac{(b+1)(b+2)(2b+6)}{12}  \right)^2. 
\end{split}
\end{align*}
In practice, $b=128$ and $b=160$ are typically used, corresponding to the length of MD-5 and
SHA hashes.
For these sizes, the matrix $T$ has $134062161025$ and $502058690721$ entries, respectively.
Assuming 32 bit float numbers, this amounts to roughly $500GB$ and  $1870GB$, respectively.

Consequently, the computations are too expensive to present an alternative to
extensive simulations.
However, the dependence on the actual ID space size can be expected to be small, at least if
the number of IDs decisively higher than the number of nodes.
The following Lemma provides an upper bound on the influence of $b$.

\begin{lemma}
Consider two Kademlia-type systems $K=$\kad{b}{\alpha}{\beta}{k}{L} and 
$\tilde{K}=$\kad{\tilde{b}}{\alpha}{\beta}{\tilde{k}}{\tilde{L}}, such that
\begin{itemize}
\item $\tilde{b} < b$
\item $k[0..\tilde{b}] = \tilde{k}$, i.e. the vector $\tilde{k}$ contains exactly
the  $\tilde{b}+1$ first entries of $k$
\item $L[0..\tilde{b}][0..\tilde{b}] = \tilde{L}$, i.e. the matrix $\tilde{L}$ is the
square matrix containing the first $\tilde{b}+1$ first entries of $L$.
\end{itemize}
The fraction of terminated queries after $i$ hops in $K$ and $\tilde{K}$ differs by at most
\begin{align}
\label{eq:accuracy}
|P^*(i) - \tilde{P}^*(i)| \leq 1-\sum_{j=0}^{\kappa} {n \choose j} p^j \left(1-p\right)^{n-j}
\end{align}
with $p=2^{-\tilde{b}}-2^{-b}$, $\kappa = \min \{ k_d:d=0\ldots b \}$, and $*\in \{up,low\}$.
\end{lemma}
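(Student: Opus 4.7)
The plan is to couple the node placements in $K$ and $\tilde{K}$ so that, on a good event of binomial-tail probability, the two systems produce identical closest-contact distributions, identical initial distribution $I$, identical transition matrices $T^*$, and therefore identical hop-count distributions, and then to bound the complement by the $\mathrm{Bin}(n,p)$-tail.

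First I would set up the coupling. The two systems share the routing parameters $\alpha$, $\beta$ and the first $\tilde b+1$ entries of $k$ and $L$, so the only source of discrepancy between them is the ambient ID-space size, which enters the closest-contact computation in Eq.~\ref{eq:F1} only through $2^{d-l-b}$ versus $2^{d-l-\tilde b}$. Using a standard thinning construction I would generate the $n$ node placements so that the multiset of bucket occupancies in $\tilde{K}$ equals the multiset in $K$ augmented by an independent ``extra'' set whose size $N$ is stochastically dominated by $\mathrm{Bin}(n, p)$ with $p = 2^{-\tilde b} - 2^{-b}$. The factor $p$ arises because each node's contribution differs across the two systems only when its $b$-bit ID matches the target's top $\tilde b$ bits while differing in the remaining $b-\tilde b$ bits, an event of probability exactly $p$.

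Second, I would show that on the good event $E = \{N \le \kappa\}$ the hop-count distributions coincide. Every bucket at shared distance $d \le \tilde b$ has capacity $k_d \ge \kappa$, so up to $\kappa$ extras fit inside a single bucket's capacity, leaving the multiset of at most $k_d$ contacts retained from each bucket region invariant between the two systems. Consequently the closest-contact distributions of Eq.~\ref{eq:Mfinal} agree, and hence $I$, $T^{up}$, $T^{low}$, and the iterates $\Hop^*_i = (T^*)^{i-1} I$ for $* \in \{up, low\}$ coincide on $E$, forcing $P^*(i) = \tilde P^*(i)$. Combining this with the stochastic domination yields
\begin{equation*}
|P^*(i) - \tilde P^*(i)| \le P(\neg E) \le P(\mathrm{Bin}(n, p) > \kappa) = 1 - \sum_{j=0}^{\kappa} \binom{n}{j} p^{j}(1-p)^{n-j},
\end{equation*}
which is precisely the claimed bound.

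The main obstacle will be rigorously verifying the absorption step: showing that up to $\kappa$ discrepancy nodes, distributed across the relevant bucket regions, cannot alter the top-$\gamma$ selection in Eq.~\ref{eq:Mfinal} or the distinct-contact accounting in Eq.~\ref{eq:iteration} in either the upper- or the lower-bound variant of $T$, so that the entire iterated Markov chain is invariant to them. This amounts to tracing the way $p$ enters the bucket-probability formulas and checking that the slack provided by the minimum bucket size $\kappa$ suffices to absorb the extras without changing the identity of the top-$\gamma$ contacts chosen at any step.
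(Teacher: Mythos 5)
Your proposal is correct and is essentially the paper's own argument: the two systems can only diverge when more than $\kappa$ nodes share a common prefix of length at least $\tilde{b}$ but less than $b$ with the target (an event of per-node probability $p=2^{-\tilde{b}}-2^{-b}$), since otherwise all such nodes fit into the minimum-size bucket and, by the maximally-full-bucket and target-containment assumptions, both systems terminate identically; the binomial tail then gives the bound. The coupling/thinning framing is extra machinery the paper does not need, but the underlying event and probability computation are the same.
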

\begin{proof}
Note that the two systems only behave different in case there at least $\kappa$ nodes that share a common prefix of length at least 
$\tilde{b}$ with the target $t$, but not with a common prefix length of $b$. 
Recall that the query is considered successful in the next step after reaching a node with common prefix length $\tilde{b}$ by Assumption
6 (see Section \ref{sec:assumptions}) in $\tilde{K}$, but not necessary in $K$.
The probability that two nodes share a common prefix of length $\tilde{b}$ to $b-1$ is 
$p=2^{-\tilde{b}}-2^{-b}$. The number of nodes with this
property is hence $B(n,p)$ distributed.
The claim follows directly.
\end{proof}

Based on Eq. \ref{eq:accuracy}, we can consider the trade-off between accuracy and computation speed in terms of the network order $n$.

\begin{theorem}
\label{thm:ncomplexity}
When the error is supposed to be bounded by $\delta$ for some $C>0$, 
$\tilde{b}=\lceil  \log_2 ( n \frac{2}{\ln 1/\delta} ) \rceil$ 
achieves the required accuracy.

Consequently, the storage complexity is $\calO\left( log^{2\alpha} n \right)$ and the computations complexity
is $\calO\left( n polylog (n) \right)$ for a constant arbitrary small error $\delta$. 
\end{theorem}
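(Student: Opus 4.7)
The plan is to combine the preceding Lemma, which bounds the error between $K$ and $\tilde{K}$, with Lemmas \ref{thm:storage} and \ref{thm:complex} specialised to $b=\tilde{b}$.

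First, I would verify that the proposed $\tilde{b}$ already achieves the required accuracy. By the preceding Lemma the error is $P(X > \kappa)$ where $X \sim B(n,p)$ and $p = 2^{-\tilde{b}} - 2^{-b} \le 2^{-\tilde{b}}$. The choice $\tilde{b} = \lceil \log_2(2n/\ln(1/\delta))\rceil$ gives immediately $np \le \tfrac{1}{2}\ln(1/\delta)$. Using the elementary tail inequality $P(X \ge \kappa+1) \le \binom{n}{\kappa+1} p^{\kappa+1} \le (np)^{\kappa+1}/(\kappa+1)!$, together with the fact that in any Kademlia-type system $\kappa = \min_d k_d \ge 1$, one concludes $P(X > \kappa) \le \delta$ as required.

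Second, the stated complexity claims follow by direct substitution. For constant $\delta$, $\tilde{b} = \Theta(\log n)$. Plugging this into Lemma \ref{thm:storage} yields storage $\calO((\alpha!)^{-2}\tilde{b}^{2\alpha}) = \calO(\log^{2\alpha} n)$. Substituting the same $\tilde{b}$ into Lemma \ref{thm:complex}, every $b$-dependent factor becomes polylogarithmic in $n$ and the dominating term $n\,\tilde{b}^{\alpha(\beta+2)}$ yields $\calO(n\cdot\mathrm{polylog}(n))$ basic operations, while all other summands are strictly smaller.

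The main obstacle lies in the first step: the bound $np \le \ln(1/\delta)/2$ alone does not force $P(X \ge 1) \le \delta$, as standard estimates such as $(1-p)^n \ge e^{-2np}$ only buy roughly $\sqrt{\delta}$. The essential observation is therefore that the error bound involves $\kappa$, not merely $P(X \ge 1)$: because the Kademlia bucket size satisfies $\kappa \ge 1$ (and in deployed systems $\kappa \in \{8,10,20\}$), the factorial decay $(np)^{\kappa+1}/(\kappa+1)!$ supplies the super-polynomial shrinkage in $\ln(1/\delta)$ that closes the gap to $\delta$. I would therefore carefully track the constants in this tail bound to confirm they are consistent with the specific formula stated for $\tilde{b}$.
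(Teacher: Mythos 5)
Your second step---substituting $\tilde{b}=\Theta(\log n)$ into Lemmas \ref{thm:storage} and \ref{thm:complex}---is exactly what the paper does and is unproblematic. The gap is in your first step. From $\tilde{b}=\lceil\log_2(2n/\ln(1/\delta))\rceil$ you correctly obtain $np\le\tfrac{1}{2}\ln(1/\delta)$, but the tail bound you then invoke, $P(X>\kappa)\le (np)^{\kappa+1}/(\kappa+1)!$, does not get anywhere near $\delta$: for $\kappa=1$ and $\delta=10^{-3}$ it evaluates to $\bigl(\tfrac12\ln 1000\bigr)^2/2\approx 6$, which is not even a probability, and for any \emph{fixed} $\kappa$ the quantity $\bigl(\tfrac12\ln(1/\delta)\bigr)^{\kappa+1}/(\kappa+1)!$ diverges as $\delta\to 0$ while the target $\delta$ shrinks. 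Retaining the factor $e^{-np}=\sqrt{\delta}$ that your crude bound drops (i.e.\ using the full Poisson-type tail) does not rescue the argument either, since one would still need $\bigl(\tfrac12\ln(1/\delta)\bigr)^{\kappa+1}/(\kappa+1)!\le\sqrt{\delta}$. So the ``essential observation'' your proof hinges on---that the factorial decay available for $\kappa\ge 1$ closes the gap left by the exponential estimate---is quantitatively false; to push the tail below $\delta$ one needs $np$ itself to be small (roughly $np\lesssim\delta$ for $\kappa=0$, or $np\lesssim \delta^{1/(\kappa+1)}$ in general, i.e.\ $\tilde{b}\gtrsim\log_2(n/\delta)$), not merely $np=O(\ln(1/\delta))$.

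For comparison, the paper's own proof takes a different route: it sets $\kappa=0$, translates the accuracy requirement into $\tilde{b}\ge\log_2\frac{1}{1-\delta^{1/n}}$, and shows $\frac{1}{1-\delta^{1/n}}<Cn$ for large $n$ via the limit $(1-\frac{1}{Cn})^n\to e^{-1/C}$ with $C=2/\ln(1/\delta)$. Note, however, that $\tilde{b}\ge\log_2\frac{1}{1-\delta^{1/n}}$ is equivalent to $(1-2^{-\tilde{b}})^n\ge\delta$, which bounds the error by $1-\delta$ rather than by $\delta$; the correct threshold for $\kappa=0$ is $\log_2\frac{1}{1-(1-\delta)^{1/n}}\approx\log_2(n/\delta)$. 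In other words, the tension you diagnosed (``the exponential estimate only buys $\sqrt{\delta}$'') is real and is not actually resolved by the paper's manipulation either---but your proposed resolution does not work, so your proof of the first claim fails. The asymptotic consequences (storage $\calO(\log^{2\alpha}n)$, computation $\calO(n\,polylog(n))$) do survive under the corrected choice $\tilde{b}=\lceil\log_2(n/\delta)\rceil=\log_2 n+O_\delta(1)$, which is all your second step actually requires.
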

\begin{proof}
For $\kappa=0$ in Eq. \ref{eq:accuracy}, we can determine an upper bound on the minimal value for $\tilde{b}$
to achieve an error of less than $\delta$. Set $C=\frac{2}{\ln 1/\delta} $ in the following.
From
\begin{align*}
\delta \leq 1 - (1-p)^n < 1 - (1-2^{-\tilde{b}})^n,
\end{align*}  
it follows that
\begin{align*}
\tilde{b} \geq \log_2 \frac{1}{1- \delta^{1/n}}.
\end{align*}
It remains to show that for $n$ large enough,
\begin{align*}
\frac{1}{1- \delta^{1/n}} < Cn 
\end{align*}
Rewriting results in $\delta < (1 - \frac{1}{Cn})^n$. 
Because $(1 - \frac{1}{Cn})^n$ converges to $e^{-1/C}$, there exists $n$, such that
\begin{align*}
\left(1 - \frac{1}{Cn}\right)^n > e^{-2/C} = e^{-ln 1/\delta} = \delta
\end{align*}
Therefore, for $n$ large enough $\tilde{b} \geq \log_2 \left( n \frac{2}{\ln 1/\delta} \right)$
ensures that $|P^*(i) - \tilde{P}^*(i)| \leq \delta$.

The storage complexity is $\calO\left( log^{2\alpha} n \right)$ by Lemma \ref{thm:storage} with $\tilde{b}  = \calO\left( \log n\right)$,
the computation complexity of $\calO\left( n polylog(n) \right)$ follows from Lemma \ref{thm:complex}.  
\end{proof}

\vspace{7pt}
Note that by using Eq. \ref{eq:accuracy} rather than the approximation in Theorem \ref{thm:ncomplexity}, the bound on $\tilde{b}$
can be further reduced. However, Theorem \ref{thm:ncomplexity} proves that the number of bits needed for a certain accuracy grows at most logarithmically
in the network size. 

In this section, we have seen that the storage complexity of our analytical solution is polylog in the number of nodes, whereas simulation require at least a linear overhead. 
The computation complexity is slightly higher than linear, but simulations are bound to require a similar cost for establishing the routing tables, not even considering the actual routing performance.
Our results in Section \ref{sec:verification} and Section \ref{sec:results} show that our algorithm can easily compute hop count distributions for large network sizes.
\section{Verification and Scalability}
\label{sec:verification}

 \schange{In this section, we show that our model agrees with no-churn simulations.
 Furthermore, we show that it scales to networks of up to $10M$ nodes
 easily and that the differences between the upper and lower bounds are
 negligible.
 We conclude with a comparison to real-world measurements.}
 {In this section, we compare the model with static simulations as well as real-world
 measurements.}
 
 \begin{figure*}[ht]
\centering
\subfloat[$n=100K,\alpha=3,\beta=2$]{\label{fig:a3b2100k}\includegraphics[width=0.5\linewidth]{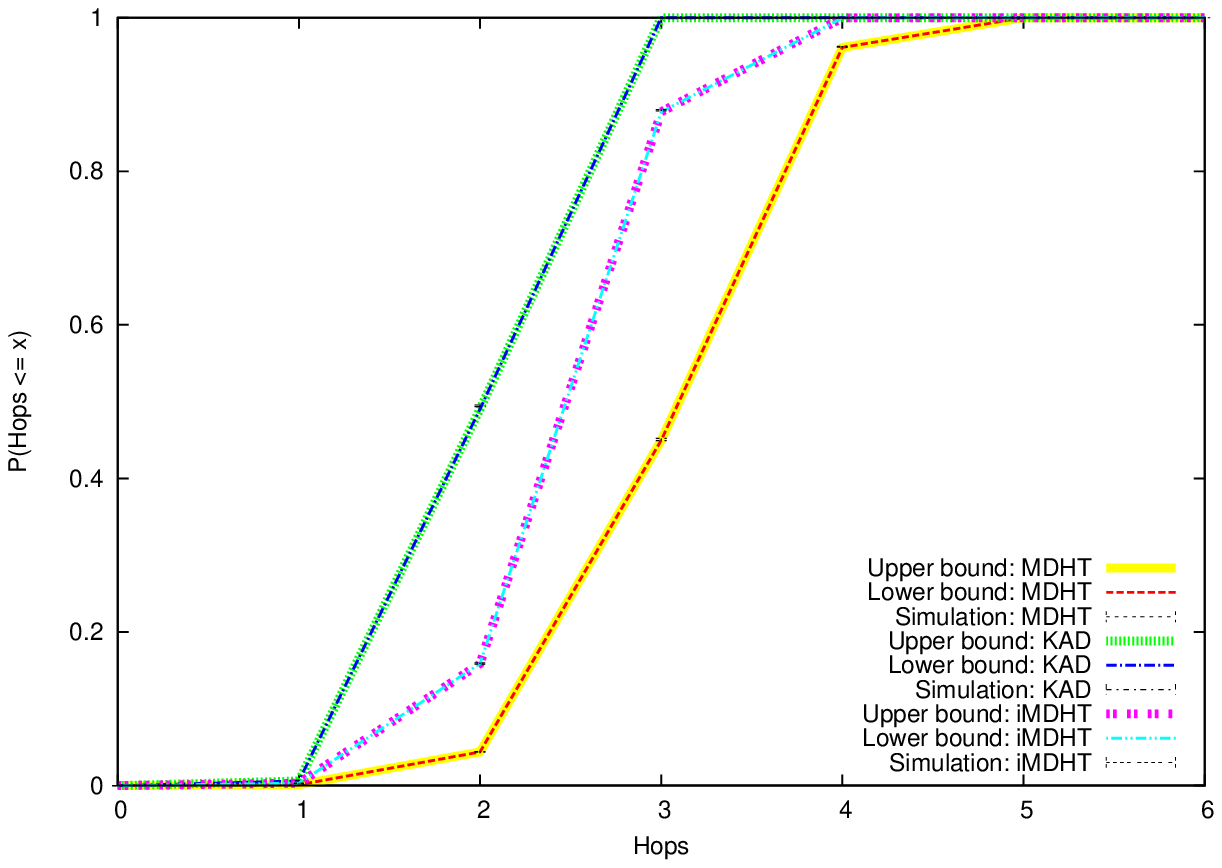}}
\subfloat[$n=100K, \alpha=4,\beta=1$]{\label{fig:a4b1100k}\includegraphics[width=0.5\linewidth]{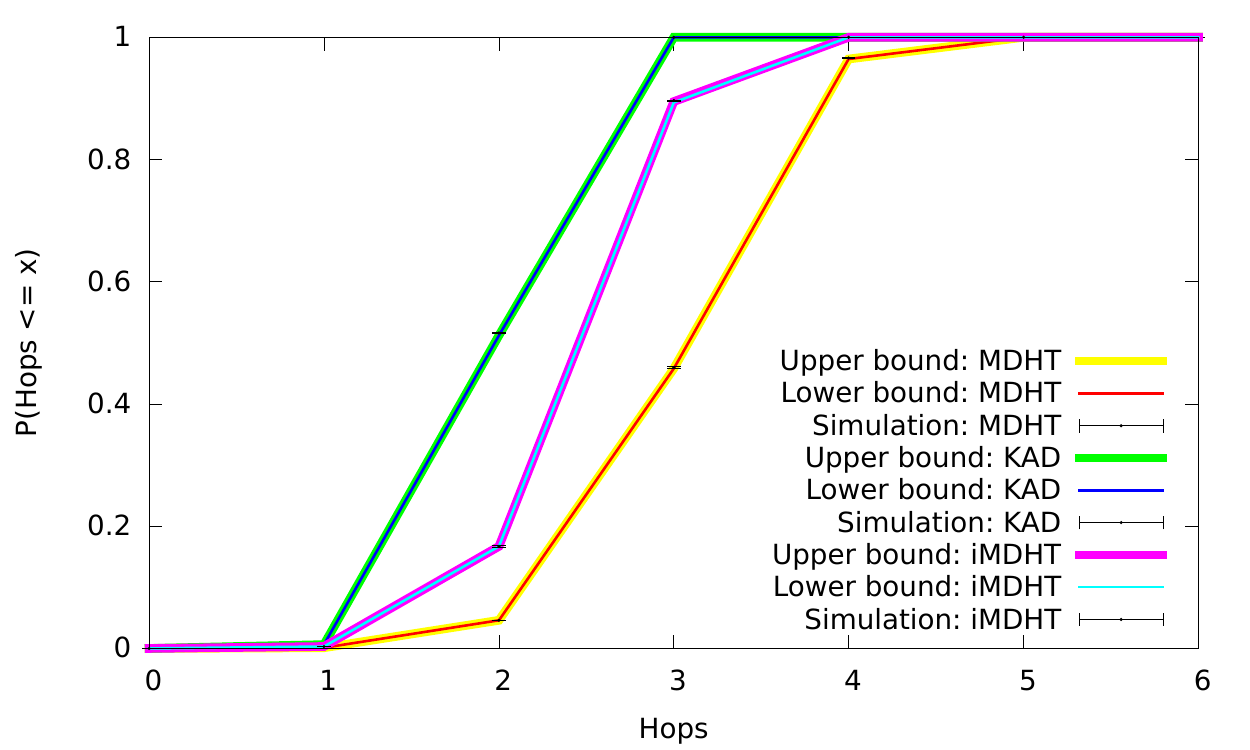}}
\caption{Model vs. Simulation: Cumulative hop count distribution of MDHT, iMDHT, and KAD ($100K$ nodes)}
\label{fig:cdfs}
\vspace{-2em}
\end{figure*}

\subsection{Model verification}
\label{sec:veri}
We have chosen three routing table structures: MDHT, iMDHT, and KAD, as they are described in Section \ref{sec:related}.
As for the routing parameters, we focus on the two settings which are used by widely-used Kademlia implementations: ($\alpha=3$, $\beta=2$) and ($\alpha=4$, $\beta=1$).

The error rate is chosen as $\delta=0.001$, which can be expected to
be far below the confidence intervals length of simulation results.  
By Eq. \ref{eq:accuracy}, the number of bits needed for the desired accuracy are 14 ($100K$, KAD),
15 ($100K$, MDHT and iMDHT), and 21 ($10M$). 
The first value is lower for KAD than MDHT and iMDHT, because KAD has a bucket size of $10$ rather than $8$. 

For validation, we use the simulation framework GTNA \cite{Schiller10gtna}, a tool for network analysis. GTNA offers a variety of metrics for analyzing graphs, as well as an easily extensible routing algorithm interface. 
We extended the framework by adding the MDHT, iMDHT, and KAD networks, as well as the Kademlia routing algorithm\footnote{The code is available at: \url{https://github.com/stef-roos/GTNA/tree/grouting}}. We choose GTNA rather than an event-based simulator for various reasons.
Most importantly, our initial model does not consider churn, failure, and varying latencies between peers. 
Including such behavior in the simulation environment will inevitably lead to derivations of
the analytic results and the observed hop counts. 
However, it is not possible to easily distinguish between these derivations and actual faults
in the model.
Using a network simulator without enabling real network conditions is an overhead with regard
to storage space, computation time as well as implementation complexity.
GTNA can easily scale to $1M$ nodes, which is hard to achieve by an event-based
simulator, e.g.  OverSim \cite{baumgart09oversim}.

The networks are generated as follows: Each node $v$ is given a 128-bit identifier, its routing table is constructed by first randomizing the list of nodes. 
Nodes in the list are considered iteratively and added to $v$'s routing table
if there is an empty slot in the corresponding bucket.
In this way, maximally full buckets are realized, which cannot be guaranteed by the 
real-world protocol.
The routing algorithm progresses step-wise, always querying $\alpha$ nodes and
processing all answers, before contacting the next set of nodes.
We generated $20$ network topologies uniformly at random, and routed to five distinct,
randomly selected, target nodes from each node's routing table.


Figures \ref{fig:a3b2100k} and \ref{fig:a4b1100k} show the resulting cumulative hop count distributions for a $100K$-node
network. 
Both the upper and lower analytic bounds are shown for the three networks and the two sets of
routing parameters. 
Furthermore, the $95\%$ confidence interval of the simulations is given.
Upper and lower bounds are extremely close (at most an absolute difference of $0.2\%$), and both are within the confidence interval of the simulation.
The negligible difference between the upper and lower bounds can be expected, seeing that the 
success probability for the first two steps is identical and most routes terminate within three hops.
The parameters $\alpha=3$,$\beta=2$ and $\alpha=4$,$\beta=1$ are hard to distinguish from the two graphics,
but the later achieves a slightly higher success rate for each hop.
We discuss the impact of the routing algorithms as well as the routing table structure in Section \ref{sec:results}.
All in all, the results show a strong agreement between the model and the simulations, indicating that the
derivation and implementation are indeed correct.

\subsection{Scaling to large networks}
\label{sec:scaling}

In this section, the hop count distribution is analytically computed for networks of $10M$ nodes.
Besides showing that the model easily scales, it is essential to see that the divergence between
upper and lower bounds is acceptable for giving meaningful performance bounds.
 \begin{figure*}[ht]
\centering
\subfloat[$n=10M,\alpha=3,\beta=2$]{\label{fig:a3b210m}\includegraphics[width=0.5\linewidth]{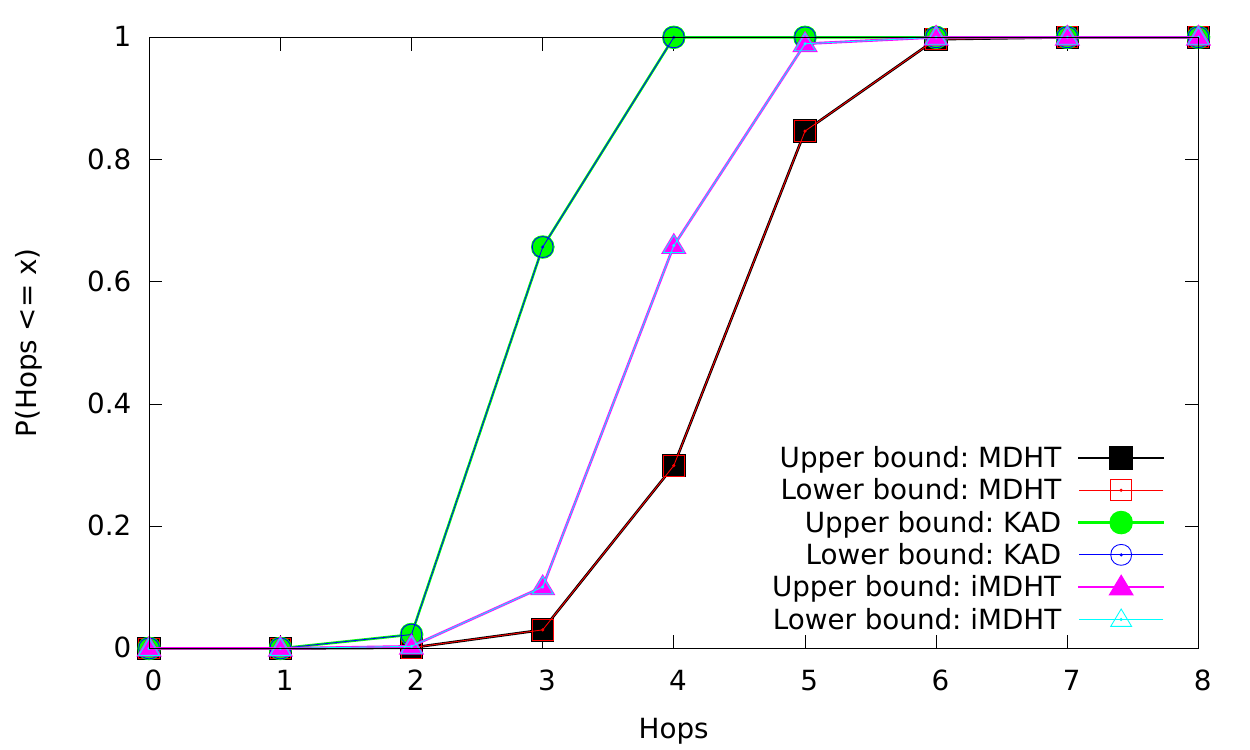}}
\subfloat[$n=10M, \alpha=4,\beta=1$]{\label{fig:a4b110m}\includegraphics[width=0.5\linewidth]{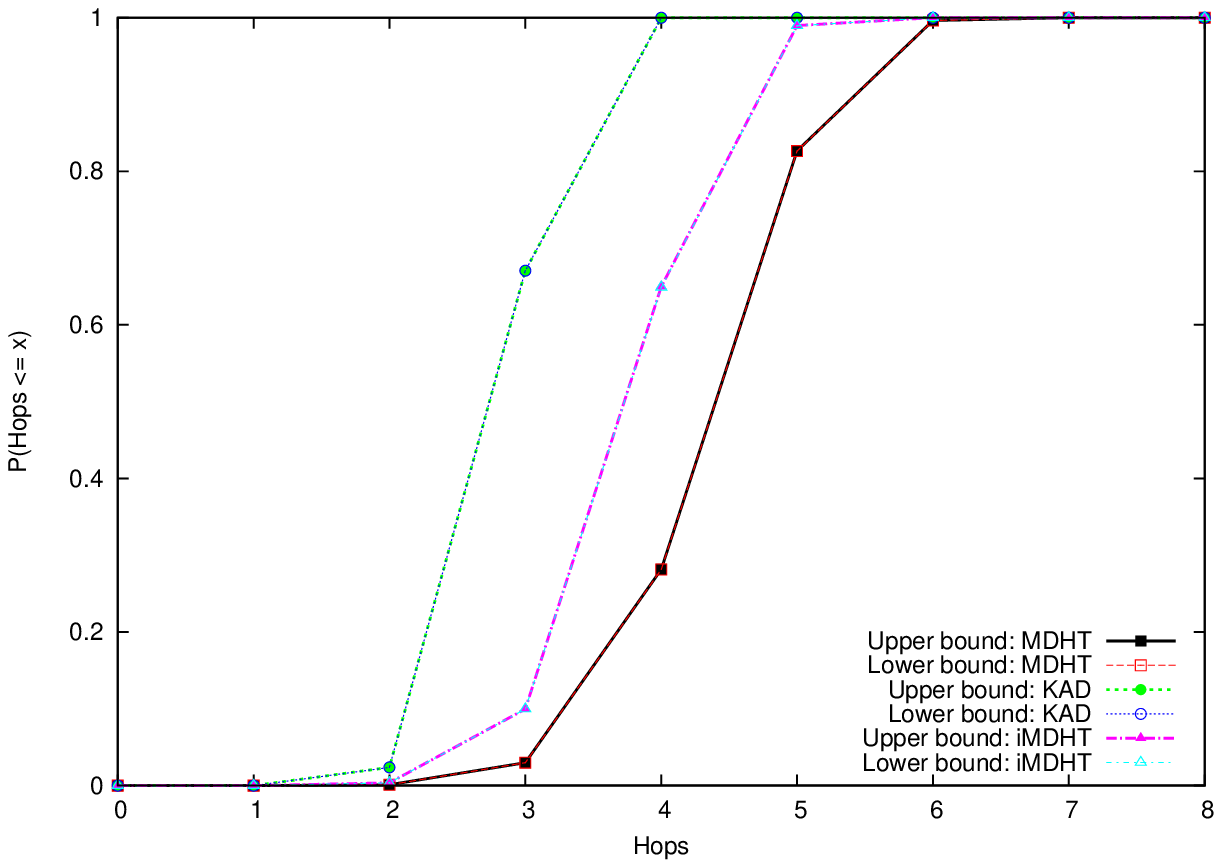}}
\caption{Cumulative hop count distribution of MDHT, iMDHT, and KAD ($10M$ nodes), Analytic bounds and simulation are extremely close and appear identical}
\label{fig:cdfs10M}
\vspace{-2em}
\end{figure*}
As can be seen in Figures \ref{fig:a3b210m} and \ref{fig:a4b110m}, upper and lower bounds remain extremely close, differing
at most by $0.5\%$ in case of MDHT using $\alpha=4$
and $\beta=1$. 

In general, we see that lookups terminate fastest in the KAD system, and faster in iMDHT than MDHT, for all considered network sizes and routing parameters.
Note that the difference between the two routing algorithms is more noticeable than for 100K nodes.
Interestingly, $\alpha=3$, $\beta=2$ achieves a higher success rate for MDHT from the third hop onwards.

\subsection{Real-world measurements}
\label{sec:realworld}

\schange{Having seen that our model provides reliable bounds for static networks, we had a look at the results of real-world measurements and
compared them to the results of our model.}
{After validation the model in a controlled environment, we compare our results with real-world measurements to see if
our assumptions are too far from reality to produce meaningful results.}
Because real-world KAD routing tables have been shown to contain a lot of stale entries as well as missing entries \cite{Stutzbach06improving, Steiner10eval}, it is expected
that the results of our model differ from the measured ones. 
 It remains to ascertain how large this divergence is.
In practice, hop counts have been measured in KAD of $1M$ nodes \cite{Stutzbach06improving}. The measured
average hop count of $3.08$ is reasonably higher than our prediction of $2.81$.
Note that we compare the results of our model to the result of \cite{Stutzbach06improving} rather than \cite{Steiner08faster, Steiner10eval} or \cite{Jimenez2011subsecond},
because the first considers locating files with a high number of replicates rather than
source-destination lookups and the latter only gives latencies.
The network has been found to have
about $10\%$ of stale contacts as well as about $15\%$ of missing entries, which are bound to slow down the routing process.   
As a consequence, obtaining reasonable bounds on deployed networks requires enhancing our model to deal with churn and routing tables incompleteness.
In the next section, we give an initial approach for dealing with failures and routing table incompleteness, which closes the gap between the model and reality.

   
\section{Extending the Model}
\label{sec:churn}
In this section, we exemplary show how to modify the model \schange{by adding
the probability that an entry is stale and thus does not return any contacts}
{to account for stale entries and bucket incompleteness}.

The model treats non-responding contacts as follows: With a probability of $1-p$, a queried node is online and returns new contacts following the distribution described in Section \ref{sec:model}, otherwise
there are no returned values for this contact. 
If less than $\alpha$ distinct contacts are returned altogether, the remaining distance values are chosen as the highest distance $d_\alpha$ of the currently queried nodes
(for an upper bound on the hop count) and the overall maximal distance of $b$ bits (lower bound), analogously to Section \ref{sec:model}.
Formally, the state $Off$ is added to characterize an unresponsive node. Assume that $X_j$ is the distance of the $j$-th closest contact $v_j$,
and $L_d$ is the guaranteed bit gain as in Section \ref{sec:model}.
Then the probability distribution of contacts $Y_j$ returned by $v_j$ is  given by
\begin{align}
& \quad P(Y_j = s | X_j = d, L_d=l) =  \\  
& \begin{cases}
p,  &s=Off \\
(1-p) P(Y_j = s | X_j = d, L_d=l, s\neq Off),  &s \neq Off
\end{cases}
.\nonumber
\end{align}
$P(Y_j = s | X_j = d, B=l, s\neq Off)$ is determined in
Eqs. \ref{eq:d0}, \ref{eq:F1}, and \ref{eq:Mfinal}.
One more change has to be made with regard to the model in Section \ref{sec:model}. The lower bound on the success rate relies on the fact that the distance to the target decreases in every step. This cannot be guaranteed because
of fall-backs due to failures. For this reason, a  hops-to-live counter $htl$, is added, which is the maximal number of routing steps until the query is aborted. As a result, at most $count = htl\cdot \alpha$ nodes can be contacted during routing, rather than the bounds provided by Eq.  \ref{eq:count1} and Eq. \ref{eq:count2}.
Note that this change only influences the lower bound on the success probability. 
For the upper bound, Eq.  \ref{eq:count1} still applies because all earlier steps are disregarded.

\schange{}{Our modification regarding bucket incompleteness is simple: We reduce the bucket size to a distance-dependent factor
$c[d]$ of its actual value. 
An accurate model is to use level-dependent distributions on the actual number of nodes per bucket, however,
it is unlikely to obtain representative data while designing the systems, so that aggregates 
offer probably a similar accuracy and reduce the complexity both in terms of comprehensibility and actual computation cost.}

The remainder of this section deals with the comparison of the extended model to measurements.
\schange{For a stale entry rate $p=0.1$ in a KAD network with $1M$ nodes,
the average hop count is bounded by $2.90$ and $2.91$ (with $htl=7$, which achieves $99.7 \%$ of successfully terminated
queries).}
{Without considering bucket incompleteness, the average hop count is bounded by $2.90$ and $2.91$ (with $htl=7$, which achieves $99.7 \%$ of successfully terminated
queries) for a stale entry rate $p=0.1$ in a KAD network with $1M$ nodes.}
This is still considerably lower than the bound of $3.08$ observed in \cite{Stutzbach06improving}. 
However, their measurements reveal that there are in average about 1.5 missing entries per bucket.
\schange{An exact model needs a distribution of missing entries on a per-level basis,
however, for a simple estimation, we only consider the average.
Then, we can simply reduce the bucket size rather than adding further variables
in form of bucket completeness distributions for each level. }{}
\schange{	Indeed, the average hop count
is increased $3.0$ for both upper and lower bound 
if the first 10 levels have a bucket size of $9$, whereas all others have $8$
(in rough agreement with the per-level averages in \cite{Stutzbach06improving}).
}{Indeed, the average hop count
is increased to $3.0$ for both upper and lower bound 
if the bucket size is reduced to $c[d]=0.9$ for $d=b-9 \ldots b$ and
$c[d]=0.8$ for $d<b-9$ of its actual value
(in rough agreement with the per-level averages in \cite{Stutzbach06improving}).}
The remaining difference can be explained by using averages for the missing entries rather than the actual distribution.
\schange{, the assumption that the target is known if it is within the $k$ closest contacts cannot be guaranteed
to hold (see Eq. \ref{eq:F1}).}{} 
Despite the expected discrepancy between theory and measurements, our model more than halves the error rate
from $5.5\%$, provided by Stutzbach and Rejaie when integrating all of the above in their analytic model,
to merely $2.67\%$.
In conclusion, our model can be extended to include failure rates and bucket incompleteness,
as long as rough estimates of these parameters are known.
\section{Lessons Learned}
\label{sec:results}

\begin{figure*}[ht]
\centering
\subfloat[Routing Algorithm Parameters]{\label{fig:routing}\includegraphics[width=0.5\linewidth]{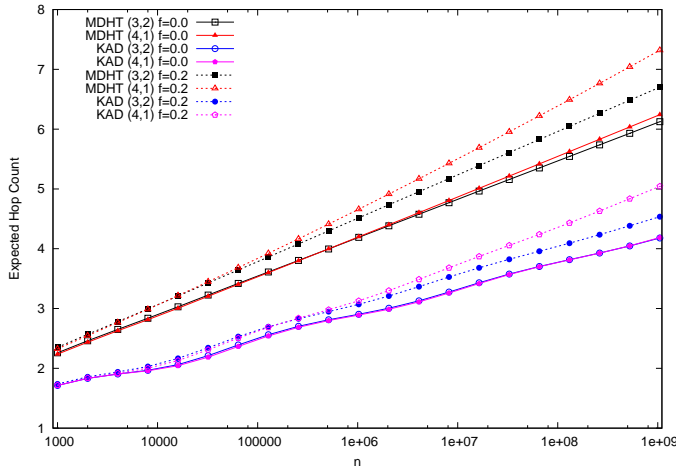}}
\subfloat[Multiple Buckets per Level]{\label{fig:resolution}\includegraphics[width=0.5\linewidth]{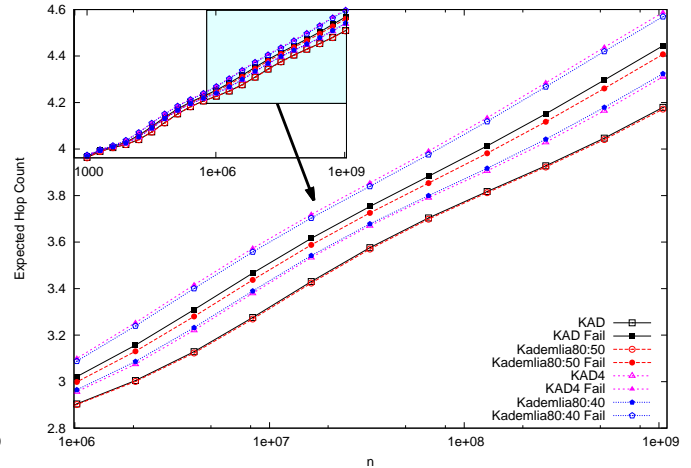}}
\caption{Comparison of routing parameters and routing table structures on the average hop count}
\label{fig:lessons}
\vspace{-2em}
\end{figure*}

In this section, we analyze the influence of the routing parameters $\alpha$ and $\beta$ as well as 
the routing table structure on the average hop count and the resilience to stale entries.
For this purpose, we denote the routing algorithm with parameters $\alpha$ and $\beta$ by $R_{\alpha,\beta}$.
Furthermore, MDHT and KAD refer to the routing table structures in the corresponding systems,
independent of the routing algorithm.
Networks of order $n=2^i \cdot 1000$ are evaluated for $i=0\ldots 20$, 
i.e. our model scales easily up to $1B$ nodes.
As in Section \ref{sec:verification}, the error rate is chosen as $\delta=0.001$.
Only upper bounds on the hop count are presented in favor of readability.
However, results for the lower bounds are very similar and entail the same conclusions.

For all considered parameters, the hop count has been shown to increase at most logarithmic
with $n=2^i$ \cite{Maymounkov02Kademlia}.
However,  if the number of contacts per level is limited by
$k$, the expected out-degree increases by $k$ whenever the number of nodes doubles, 
so that the average shortest
path length is of order $d=\calO\left(\frac{i}{\log i + \log k}\right)$
(solving $(ik)^d = 2^i$)
 rather than $i = \log n$.
Due to the extremely short routes observed in large networks, the hop count can be expected to 
follow a similar dependence. 
Indeed,  the sub-logarithmic routing complexity is visualized  as a slight curvature in the log plot (see Figure \ref{fig:lessons}), which is more noticeable
if the number of contacts per level is higher.

We start our evaluation of real-world systems by analyzing if 
 the change from $R_{3,2}$ to $R_{4,1}$ in MDHT actually decreases the
average hop count. 
In addition, we also consider the influence on the KAD routing
table structure.
In order to evaluate the impact of churn, the fraction $f$ of stale contacts is varied between $0.0$ and
$0.2$.
It can be expected that for smaller networks, the use of a higher value for $\alpha$ actually
increases the success rate because more routing tables are considered
in each hop.
However, when the network size increases, the high fall-back in case of duplicates or failures
for $\beta=1$ is bound to decrease the performance, 
so that there is a threshold from which on $R_{3,2}$ achieves shorter routes.
The advantage of $R_{3,2}$ is bound to be more obvious when the fraction of stale
entries is high because of the increased use of fall-back contacts.

Indeed, in the absence of churn, about half
a million nodes are needed for $R_{3,2}$ to have a lower hop count
for MDHT, 
but for KAD the threshold is only reached at half a billion nodes.
As expected, $R_{3,2}$ deals with churn more effectively due to the high number of
returned contacts to choose from.
Assuming a stale entry rate of $20 \%$, the average hop count in MDHT is increased
by up to $8 \%$ for $R_{3,2}$, but more than $12\%$ for $R_{4,1}$ assuming $8$ million participants
(about the estimated population of MDHT).
The relative performance degradation increases with the network size due to the
the longer routes, so that for $1B$ nodes, a divergence of up to $21\%$ 
occurs.


Interestingly, for common real-world networks of $1M$ to $10M$ nodes, the change from $R_{3,2}$ to
$R_{4,1}$ would have decreased the hop count in the KAD system for a very low stale entry rate, but not in MDHT, which actually
introduced this change. 
Note that the number of contacted nodes per hop is not increased by
a high value for $\beta$, but grows linearly with $\alpha$.
Overall, an increased number of returned contacts $\beta$ 
is preferable to an increased degree of parallelism $\alpha$
in dynamic networks with a non-negligible stale entry rate. 


The second part of this section deals with the impact
of multiple buckets per level.
For the evaluation, we first compare KAD to a Kademlia with the same limited number of contacts per level,
i.e. $k_b = 80$ contacts on the first level and $k_i=50$
for $i < b$ on all lower levels, denoted Kademlia80:50.
Furthermore, the 5 buckets on the same level are responsible for different sized fractions of the ID space in KAD.
In order to evaluate the influence of such a skewed split, a KAD (called KAD4) with buckets $111$, $110$, $101$, and $100$
for all lower levels, together with the respective version of Kademlia (Kademlia80:40) is analyzed.
Furthermore, churn and routing table incompleteness are also included in the evaluation, using the
parameters from Section \ref{sec:churn}:
The stale entry rate is chosen as $0.1$, the bucket size is reduced to $0.9$ of its actual value
for the first $10$ levels and $0.8$ for the remaining 	levels.  


Expectations on the hop count can be derived from the expected bit gain per level:
KAD offers a slightly lower bit gain on all levels but the first (See \cite{Stutzbach06improving}, Eq. 6).
In contrast, KAD4 offers a slightly higher expected bit gain than Kademlia80:40 on all levels.
So, we can expect KAD to have a worse performance in terms of the average hop count than Kademlia80:50.
In contrast, KAD4 is bound to outperform Kademlia80:40
from some threshold on, at least in the absence of churn.
Note that the average degree is lower in KAD/KAD4, because the one bucket in the respective Kademlia version is full if all KAD buckets are full, but not vice versa.
Consequently, resilience to node failures should be higher in Kademlia.

The results agree with the above expectations, but also show that the influence of multiple buckets is
small, as can be seen in Figure \ref{fig:resolution}. 
In the absence of churn, the advantage of Kademlia80:50 is barely noticeable, being always in the order of
$0.006$ to $0.007$ hops.
On the other hand, when considering churn and bucket incompleteness, Kademlia80:50 has an advantage of up to $0.04$ hops due to the higher fraction of queries that terminate in the first hops.
The difference between KAD4 and Kademlia80:40 is even less, 
at most $0.002$ hops. 
In the absence of churn, KAD4 indeed achieves a slightly reduced hop count, whereas Kademlia80:40
has a similarly small advantage when considering churn. 
Note that multiple buckets per level reduce the average routing table size by about 7 (KAD) and 2 (KAD4).
Given that routing tables usually contain hundreds to thousands of contacts, such a small
constant storage advantage is negligible.

All in all, our results indicate that multiple buckets reduce the routing table size slightly, but only have a positive effect
on the average hop count if the churn rate is low and the ID space is split equally.
Indeed, the observed advantage of an equal split between multiple buckets can also be achieved by a modified
replacement algorithm in one bucket, which prefers contacts that enhance the diversity of the IDs in the bucket.

\section{Conclusion}
\label{sec:conclusion}

%
%

We have introduced a scalable accurate computation of the hop count distribution in Kademlia-type systems, the only
widely deployed structured P2P systems.
Both simulations and measurements validate our model.
Furthermore, we demonstrated the utility of our model by analyzing common design decisions
in Kademlia-type systems, showing that returning $\beta > 1$ contacts per query is
essential for achieving shorter routes both in static as in dynamic environments.
In addition, we found that having multiple buckets per level
does not necessarily increase the performance but degrades the resilience, and
suggest a modified replacement strategy to combine the higher resilience of
a single bucket per level with the increased bit gain of multiple buckets.

Whereas the model covers all common routing table structure, alterations in the routing
process, such as interleaving queries and recursive routing as well as
a vulnerability analysis of the systems in face of attacks remain future work.

\bibliographystyle{unsrt}
\bibliography{main}

\begin{thebibliography}{10}

\bibitem{Maymounkov02Kademlia}
Petar Maymounkov and David Mazi\`{e}res.
\newblock Kademlia: A peer-to-peer information system based on the xor metric.
\newblock In {\em Proceedings of IPTPS}, 2002.

\bibitem{junemann11towards}
Konrad Junemann et~al.
\newblock Towards a basic dht service: Analyzing network characteristics of a
  widely deployed dht.
\newblock In {\em Proceedings of GridPeer}, 2011.

\bibitem{salah13capturing}
Hani Salah and Thorsten Strufe.
\newblock Capturing connectivity graphs of a large-scale p2p overlay network.
\newblock In {\em Proceedings of HotPOST}, 2013.

\bibitem{Stutzbach06improving}
Daniel Stutzbach and Reza Rejaie.
\newblock Improving lookup performance over a widely-deployed dht.
\newblock In {\em Proceedings of INFOCOM}, 2006.

\bibitem{Steiner10eval}
Moritz Steiner et~al.
\newblock Evaluating and improving the content access in kad.
\newblock {\em Peer-to-Peer Networking and Applications}, 2010.

\bibitem{Jimenez2011subsecond}
Raul Jimenez et~al.
\newblock Sub-second lookups on a large-scale kademlia-based overlay.
\newblock In {\em Proceedings of IEEE P2P Computing}, 2011.

\bibitem{Steiner07global}
Moritz Steiner et~al.
\newblock A global view of kad.
\newblock In {\em Proceedings of IMC}, 2007.

\bibitem{Wang08attacking}
Peng Wang et~al.
\newblock Attacking the kad network.
\newblock In {\em Proceedings of SecureComm}, 2008.

\bibitem{falkner07profiling}
Jarret Falkner et~al.
\newblock Profiling a million user dht.
\newblock In {\em Proceedings of IMC}, 2007.

\bibitem{Crosby07ananalysis}
Scott Crosby and Dan Wallach.
\newblock An analysis of bittorrent’s two kademlia-based dhts.
\newblock Technical report, Rice University, 2007.

\bibitem{Steiner08faster}
Moritz Steiner et~al.
\newblock Faster content access in kad.
\newblock In {\em Proceedings of IEEE P2P Computing}, 2008.

\bibitem{Schiller10gtna}
Benjamin Schiller et~al.
\newblock Gtna: A framework for the graph-theoretic network analysis.
\newblock In {\em Proceedings of Springsim}, 2010.

\bibitem{baumgart09oversim}
Ingmar Baumgart et~al.
\newblock Oversim: A scalable and flexible overlay framework for simulation and
  real network applications.
\newblock In {\em Proceedings of IEEE P2P Computing}, 2009.

\end{thebibliography}

\end{document}